\documentclass[lettersize,journal]{IEEEtran}

\usepackage{amsmath,amsfonts}
\usepackage{array}
\usepackage[caption=false,font=footnotesize]{subfig}

\usepackage{textcomp}
\usepackage{stfloats}
\usepackage{url}
\usepackage{verbatim}
\usepackage{graphicx}
\hyphenation{op-tical net-works semi-conduc-tor IEEE-Xplore}
\def\BibTeX{{\rm B\kern-.05em{\sc i\kern-.025em b}\kern-.08em
    T\kern-.1667em\lower.7ex\hbox{E}\kern-.125emX}}
\usepackage{balance}

\captionsetup[subfloat]{justification=centering,singlelinecheck=false}

\input{symbols.sty}
\usepackage{float}
\usepackage{algorithm}
\usepackage{algpseudocode}

\usepackage[utf8]{inputenc}
 \usepackage{booktabs}

\usepackage{amsmath,graphicx}
\usepackage{amssymb}
\usepackage{xcolor}
\usepackage{mathrsfs}
\usepackage{hyperref}
\usepackage[nameinlink]{cleveref}
\usepackage{bm}
\usepackage{float}
\usepackage{enumitem}

\usepackage{hyperref}

\usepackage{ifthen}
\newboolean{if_comment} 
\setboolean{if_comment}{false} 
\usepackage{mathtools}
\newcommand{\commentg}[1]{%
  \ifthenelse{\boolean{if_comment}}{{\color{red} GL: #1}}{\ignorespaces}%
}

\newcommand{\commenty}[1]{%
  \ifthenelse{\boolean{if_comment}}{{\color{blue} YH: #1}}{\ignorespaces}%
}

\newcommand{\commentr}[1]{%
  \ifthenelse{\boolean{if_comment}}{{\color{green} Raj: #1}}{\ignorespaces}%
}

\newboolean{if_color} 
\setboolean{if_color}{false} 

\newcommand{\changes}[1]{%
  \ifthenelse{\boolean{if_color}}{\textcolor{green}{#1}}{#1}%
}

\begin{document}
\title{A Covariance Matching Approach to Graph Topology Identification}
\author{%
  Yongsheng~Han,~\IEEEmembership{Student Member,~IEEE,}%
  Raj~Thilak~Rajan,~\IEEEmembership{Senior Member,~IEEE,}%
  and~Geert~Leus,~\IEEEmembership{Fellow,~IEEE}%
  \thanks{This work is partially funded by the European Commission Key Digital Technologies Joint Undertaking - Research and Innovation (HORIZON-KDT-JU-2023-2-RIA), under grant agreement No 101139996, the ShapeFuture project - ''Shaping the Future of EU Electronic Components and Systems for Automotive Applications”. All authors are with Signal Processing Systems, Department of Microelectronics,
  Delft University of Technology, 2628~CD Delft, The~Netherlands
  (e-mail: \{y.han, r.t.rajan, g.j.t.leus\}@tudelft.nl). An open-source implementation is available at \url{https://github.com/asil-lab/GTI_Covariance_Match}.}%
}%
\markboth{Journal of \LaTeX\ Class Files,~Vol.~18, No.~9, September~2020}%
{How to Use the IEEEtran \LaTeX \ Templates}

\maketitle

\begin{abstract}
Graph topology identification (GTI) is a central challenge in networked systems, where the underlying structure is often hidden, yet nodal data are available. Conventional solutions to address these challenges rely on probabilistic models or complex optimization formulations, commonly suffering from non-convexity or requiring restrictive assumptions on acyclicity or positivity. In this paper, we propose a novel {covariance matching} (CovMatch) framework that directly aligns the empirical covariance of the observed data with the theoretical covariance implied by an underlying graph. We show that as long as the data-generating process permits an explicit covariance expression, CovMatch offers a unified route to topology inference.

We showcase our methodology on linear structural equation models (SEMs), showing that CovMatch naturally handles both undirected and general sparse directed graphs - whether acyclic or positively weighted - {without} explicit knowledge of these structural constraints. Through appropriate reparameterizations, CovMatch simplifies the graph learning problem to either a conic mixed integer program for undirected graphs or an orthogonal matrix optimization for directed graphs. Numerical results confirm that, even for relatively large graphs, our approach efficiently recovers the true topology and outperforms standard baselines in accuracy. These findings highlight CovMatch as a powerful alternative to log-determinant or Bayesian methods for GTI, paving the way for broader research on learning complex network topologies with minimal assumptions.
\end{abstract}

\begin{IEEEkeywords}
graph topology identification, covariance matching, structural equation model, directed acyclic graph (DAG)
\end{IEEEkeywords}

\section{Introduction}

\label{sec:intro}

Graph topology identification (GTI) remains a critical problem in graph signal processing (GSP), where systems are modeled as networks, yet their actual underlying structure is often invisible. Examples of such systems include brain functional connectivity networks and social networks. In these applications, while the direct graph structure is not observable, nodal data is typically available. For instance, in academic networks~\cite{liu2019shifu2}, the advisor-advisee links may not be visible, yet we can analyze collaborative patterns to uncover these connections. Similarly, in brain networks~\cite{mclntosh1994structural}, neural signals provide indirect clues about the connectivity. Therefore, the primary challenge in graph topology identification lies in deducing the hidden graph structure from these nodal observations, a task that is fundamental for analyzing and understanding the interactions within these networks.

A widely used approach to represent such hidden structures is the linear structural equation model (SEM)~\cite{sem_online,dyna_sem_giannkis,id_sem_circle,cai2013inference}. 
Note that although nonlinear versions of the SEM exist, we solely focus on the linear version in this paper, and will drop the term ``linear'' from now on. 
Under known exogenous inputs, it has been shown that, with diagonal input coupling and sufficiently rich exogenous signals, the underlying directed network can be uniquely recovered~\cite{id_sem_circle}.
However, when the exogenous variables remain unobserved, the SEM is often coupled with a directed acyclic graph (DAG) constraint, which is widely used for causal inference~\cite{pearl1998graphs,spirtes2000causation}.
Initially, it was shown that a score‑based search for the optimal DAG is NP‑hard~\cite{chickering2004large}. At that juncture, identifiability was an open problem, which hindered early progress in both theory and practice, thereby confining most efforts to small-scale graphs. 
This deadlock was first broken in the case of non-Gaussian disturbances by the so-called linear non-Gaussian acyclic model (LiNGAM), which established that the underlying DAG can be uniquely identified~\cite{shimizu2006linear}.
For the Gaussian case, this breakthrough was presented in \cite{id_sem_no_circle}, showing that DAG structures are indeed identifiable under certain conditions and further proposing a greedy search method to recover the underlying graph.

The above foundational works paved the way for subsequent developments, resulting in more efficient algorithms such as those in~\cite{DAG_notears,bello2022dagma,DAG_improved}, where the acyclicity constraint is replaced by powerful smooth characterizations that enable gradient-based optimization. These methods have achieved impressive performance, handling larger graphs with faster runtimes and lower estimation errors. 
Nevertheless, many real-world networks deviate from the DAG assumption, and learning more general directed graphs remains challenging. Existing approaches usually rely on a greedy search or additional metrics to guide optimization~\cite{amendola2020structure,yi2024filter}, leaving considerable space for further improvements.

In addition to directed graphs, undirected graphs have also been studied extensively in GSP and various methods have been proposed to address structure learning. For instance, approaches based on spectral templates (SpecTemp) leverage the eigen-decomposition of the data covariance matrix to estimate the underlying undirected topology~\cite{spectemp}. 
When the graph stationarity does not hold, a related line of work leverages richer second-order information—typically through multiple input--output diffusion pairs—to identify an unknown graph filter from covariance relations and then infer a sparse graph shift operator (GSO) via dedicated optimization, which has been developed for undirected graphs~\cite{shafipour2021identifying} and directed graphs via optimization on the Stiefel manifold~\cite{shafipour2018directed}.
These methods do not assume a specific graph filter or restrict its functional form, and thus often require a large number of samples.
Another covariance-based method is the covariance matching (CovMatch) framework, which was recently introduced in~\cite{ICASSP_covmatch}. It streamlines the estimation process by converting the graph estimation task into a binary variable inference problem. However, this framework only enforces the hollowness constraint (i.e., no self-loops) and does not use additional priors such as sparsity. Hence, accurate recovery typically requires very large sample sizes, even if the SEM structure is explicitly considered.
 In this work, we build on~\cite{ICASSP_covmatch} and generalize CovMatch to handle both undirected graphs and general directed graphs, with only a sparsity constraint on the number of active edges. Our proposed approach easily integrates additional constraints when needed and, even without explicit constraints (except for sparsity), can still recover diverse types of graphs, including DAGs and positively weighted directed graphs.

\subsection{Contributions}

In this work, we propose a novel framework for GTI based on CovMatch. The overarching idea is remarkably simple: as long as the data-generating model permits an explicit expression for the covariance, our approach provides a pathway toward estimating the underlying graph. The key contributions are as follows:

\paragraph{Unified covariance matching formulation}
We introduce a versatile framework that aligns the empirical covariance with the theoretical covariance derived from an underlying SEM. Although this paper focuses on an SEM, the proposed approach inherently accommodates any model whose covariance can be computed explicitly.

\paragraph{General sparse directed graphs}
To the best of our knowledge, our method is the first to handle {general sparse directed} graphs through a covariance matching paradigm for an SEM. It accurately infers network structures that go beyond the usual DAGs or undirected graphs.

\paragraph{Robustness without prior knowledge}
A notable feature is that our approach relies solely on the graph's inherent sparsity and requires no additional priors. No assumptions regarding acyclicity or positive edge weights are required, yet empirical results show strong performance across diverse network types.

\paragraph{Scalability to larger graphs}
We demonstrate that the proposed method can address relatively large graphs—even those that are not especially sparse—thanks to its binary or orthogonal variable reparameterization and efficient solution strategies. This scalability is particularly useful in real-world applications where network size is large.

Taken together, these contributions demonstrate that CovMatch can serve as a powerful method for GTI. By leveraging the generality of the covariance structure, our framework opens the door to new research directions in learning complex undirected and directed graphs.

\subsection{Notation}
\label{subsec:notation}

Scalars are written in italics, vectors in bold lowercase (e.g., $\mathbf a$), and matrices in bold uppercase (e.g., $\mathbf A$). All vectors are column vectors unless stated otherwise. Real domain is denoted by $\mathbb R$; $\mathbb R^{m\times n}$ is the set of real $m\times n$ matrices. For a matrix $\mathbf A$, the transpose is $\mathbf A^\top$, the inverse (when it exists) is $\mathbf A^{-1}$, and $\mathbf A^{-\top}=(\mathbf A^{-1})^\top$. The identity is $\mathbf I$; $\mathbf 0$ and $\mathbf 1$ denote all-zeros and all-ones arrays of compatible size. The trace and determinant are $\operatorname{tr}(\mathbf A)$ and $\det(\mathbf A)$. We write $\mathbf A\succeq 0$ for positive semidefinite and $\mathbf A\ge 0$ for elementwise nonnegativity.

For a vector $\mathbf a$, $\mathrm{diag}(\mathbf a)$ is the diagonal matrix with diagonal $\mathbf a$; for a matrix $\mathbf A$, $\mathrm{diag}(\mathbf A)$ extracts its diagonal into a vector. We also use $\operatorname{Diag}(\mathbf A)$ for the diagonal-matrix projection that keeps only the diagonal of $\mathbf A$ and zeros out off-diagonal entries, i.e., $\operatorname{Diag}(\mathbf A)\coloneqq \mathrm{diag}(\mathrm{diag}(\mathbf A))$. The Hadamard (elementwise) product is $\mathbf A\circ \mathbf B$. Powers of vectors are taken elementwise: for a scalar $\alpha$, $(\mathbf a)^\alpha$ raises each entry of $\mathbf a$ to $\alpha$. Matrix norms follow standard conventions: the Frobenius norm is $\|\mathbf A\|_F=\sqrt{\sum_{i,j}A_{ij}^2}$ and the elementwise $\ell_1$ norm is $\|\mathbf A\|_1=\sum_{i,j}|A_{ij}|$ (for a vector $\mathbf a$, $\|\mathbf a\|_1=\sum_i|a_i|$).

Matrix functions are used in the usual sense. The matrix exponential is $\exp(\mathbf A)=\sum_{k\ge 0}\mathbf A^k/k!$. The matrix logarithm $\log(\mathbf A)$ denotes the principal logarithm. When $\mathbf A\succ 0$, $\log(\mathbf A)$ is real. For orthogonal $\mathbf A$, $\log(\mathbf A)$ is real and skew-symmetric if $\det(\mathbf A)=+1$; if $\det(\mathbf A)=-1$, $\log(\mathbf A)$ is complex.

Binary vectors of length $N$ are written $\{-1,1\}^N$. The orthogonal group is
$\mathcal O(N)=\{\mathbf A\in\mathbb R^{N\times N}:\ \mathbf A^\top \mathbf A=\mathbf I\}.$
We use $\mathcal U(a,b)$ for the continuous uniform distribution on $[a,b]$ and $\mathcal U(\mathcal S)$ for the discrete uniform distribution over a finite set $\mathcal S$. A “uniform” draw on $\mathcal O(N)$ refers to the Haar measure on $\mathcal O(N)$ (left- and right-invariant). 

We use the Frobenius inner product $\langle \mathbf X,\mathbf Y\rangle=\operatorname{tr}(\mathbf X^\top \mathbf Y)$. Matrix differentials are denoted by $\mathrm d$; for a scalar function $f$ of a matrix $\mathbf A$ and a perturbation $\mathbf H$, the differential is $\mathrm d f(\mathbf A)[\mathbf H]=\langle \nabla_{\mathbf A} f(\mathbf A),\,\mathbf H\rangle$, which defines the Euclidean gradient $\nabla_{\mathbf A} f(\mathbf A)$ (also written $\partial f/\partial \mathbf A$).

\subsection{Graph Signal Processing}
\label{sec:preliminaries}

This subsection introduces the key concepts and notation from graph signal processing (GSP) that underpin our framework. We begin by defining the basic elements of a graph, its corresponding graph shift operator (GSO), and the notion of a graph signal.

\paragraph{Graphs}
Let $\mathcal{G} = (\mathcal{V}, \mathcal{E})$ denote a graph, where $\mathcal{V} = \{1, 2, \ldots, N\}$ is the set of nodes and $\mathcal{E} \subseteq \mathcal{V} \times \mathcal{V}$ is the set of edges. For a \emph{directed} edge from node $j$ to node $i$, we write $(j,i) \in \mathcal{E}$. In the \emph{undirected} case, whenever $j$ and $i$ are connected, both $(j,i)$ and $(i,j)$ belong to $\mathcal{E}$. We do not consider self-loops, so $(i,i) \notin \mathcal{E}$.  

\paragraph{Graph shift operator (GSO)}
A graph can be represented by a shift operator $\mathbf{S}\in\mathbb{R}^{N\times N}$, which shares the same support as $\mathcal{E}$. Specifically, $S_{ij}\neq 0$ only if $(j,i)\in \mathcal{E}$ or $i=j$. Common choices include the adjacency matrix or (normalized) Laplacian. In this work, we adopt the adjacency matrix, and since we exclude self-loops, $\mathrm{diag}(\mathbf{S})=\mathbf{0}$.

\paragraph{Graph signals}
A graph signal is formed by assigning a scalar value to each node, resulting in an $N\times 1$ vector $\mathbf{x} = [x_1, x_2, \dots, x_N]^\top$. The ordering of entries in $\mathbf{x}$ corresponds to the ordering of nodes in $\mathbf{S}$. GSP often assumes a relationship between the graph signal $\mathbf{x}$ and the GSO $\mathbf{S}$. While many such relationships exist, the next section introduces the SEM, which links $\mathbf{x}$ directly to $\mathbf{S}$ via a set of linear dependencies.

\section{Graph Topology Identification}
\label{sec:GTI_overview}

Graph topology identification (GTI) aims to recover the underlying matrix $\mathbf{S}$ (e.g., an adjacency matrix or more general graph shift operator) from data observations. Concretely, let $\{\mathbf{x}_t\}_{t=1}^T$ be a collection of random vectors in $\mathbb{R}^N$ drawn independently from a parametric distribution
$p(\mathbf{x}| \mathbf{S})$,
where $\mathbf{S}$ embodies the unknown network structure. Depending on the specific modeling assumptions, $\mathbf{S}$ may represent an undirected graph, a directed graph, or a more specialized structure (e.g., a DAG).

Characterising $p(\mathbf{x} | \mathbf{S})$ typically requires assumptions on the data-generating mechanism. In this paper, we focus on an SEM as an illustrative example for our study. Nonetheless, it is important to note that our approach has the potential to be applied to any model whose covariance structure can be computed explicitly.

\subsection{Structural Equation Model}
\label{sec:SEM_DAG_x}

In a structural equation model (SEM), each random vector $\mathbf{x} \in \mathbb{R}^N$ is generated according to the equation $\mathbf{x} = \mathbf{S}\,\mathbf{x} + \mathbf{e}$, where ${\bf S}$ represents the weighted adjacency matrix of the graph and $\mathbf{e}\in\mathbb{R}^N$ denotes unknown exogenous variables or, equivalently, a noise term. By rearranging the terms in the equation, one obtains $\mathbf{x} = (\mathbf{I} - \mathbf{S})^{-1}\,\mathbf{e}$. For convenience, we define $\mathbf{H} = (\mathbf{I} - \mathbf{S})^{-1}$, such that each entry of $\mathbf{x}$ can be interpreted as a weighted combination of the entries of $\mathbf{e}$, specifically $\mathbf{x} = \mathbf{H}\,\mathbf{e}$.

If $\mathbf{e}$ is zero mean with covariance $\boldsymbol{\Sigma}_{\mathbf{e}}$, then ${\bf x}$ is zero mean with covariance
\begin{equation}
\label{eq:cov_general}
\boldsymbol{\Sigma}_{\mathbf{x}}
= \mathbb{E}[\mathbf{x}\mathbf{x}^\top]
=(\mathbf{I}-\mathbf{S})^{-1}\boldsymbol{\Sigma}_{\mathbf{e}}(\mathbf{I}-\mathbf{S})^{-\top}
=\mathbf{H}\boldsymbol{\Sigma}_{\mathbf{e}}\mathbf{H}^\top.
\end{equation}

A particular instance arises when $\boldsymbol{\Sigma}_{\mathbf{e}}=\mathbf{I}$, which leads to $\boldsymbol{\Sigma}_{\mathbf{x}}=\mathbf{H}\mathbf{H}^{\top}$, simplifying to 
$\boldsymbol{\Sigma}_{\mathbf{x}}=\mathbf{H}^2=(\mathbf{I}-\mathbf{S})^{-2}$
when $\mathbf{S}$ is symmetric (undirected graph).

We construct $\mathbf{X}=[\mathbf{x}_1,\dots,\mathbf{x}_T]$ and $\mathbf{E}=[\mathbf{e}_1,\dots,\mathbf{e}_T]$, where $T$ is the number of samples. The matrix form of the SEM then becomes
\begin{equation}
\label{eq:SEM_matrix}
\mathbf{X}=\mathbf{S}\mathbf{X}+\mathbf{E}.
\end{equation}

\subsection{Maximum Likelihood Estimation and Its Limitations}
\label{subsec:Gaussian_SEM_likelihood}

In the special case where the exogenous variables are Gaussian, i.e., the noise is Gaussian, we obtain $p({\bf x}| {\bf S}) = \mathcal{N}({\bf 0}, \boldsymbol{\Sigma}_{\mathbf{x}})$, where $\boldsymbol{\Sigma}_{\mathbf{x}}$ is parameterized by ${\bf S}$ as in~\eqref{eq:cov_general}.
In other words, we obtain
\begin{equation}
p(\mathbf{x} | {\mathbf{S}})
=
\frac{1}{(2\pi)^{N/2}\bigl| {\boldsymbol{\Sigma}}_{\mathbf{x}}\bigr|^{1/2}}
\exp\!\Bigl(
-\tfrac{1}{2}\mathbf{x}^\top {\boldsymbol{\Sigma}}_{\mathbf{x}}^{-1}\mathbf{x}
\Bigr).
\end{equation}
The negative log-likelihood (up to additive constants) for the set of vectors $\{\mathbf{x}_t\}_{t=1}^T$ (or the matrix ${\bf X}$) thus yields the classical objective function~\cite{connect_dots}
\begin{equation}
\label{eq:ML_objective}
f(\hat{\bf S}) =\log\det\bigl(\hat{\boldsymbol{\Sigma}}_{\mathbf{x}}\bigr)
+
\mathrm{tr}(
\hat{\boldsymbol{\Sigma}}_{\mathbf{x}}^{-1}\mathbf{C}_{\mathbf{x}} ),
\quad
\end{equation}
where
$
\mathbf{C}_{\mathbf{x}}
=
\frac{1}{T}\mathbf{X}\mathbf{X}^\top$
denotes the sample covariance matrix. Note that here we use a hat on $\hat{\bf S}$ and $\hat{\boldsymbol \Sigma}_{\bf x}$ to indicate that they are optimization variables. This is in order to distinguish them from the ground truth ${\bf S}$ and ${\boldsymbol \Sigma}_{\bf x}$. The same will hold for other symbols throughout this manuscript.

Direct optimization of~\eqref{eq:ML_objective} is difficult, as the log-determinant term introduces strong non-convexity. A common workaround is to drop this component, leading to a {signal matching} (SigMatch) objective $\|\mathbf{X}-\hat{\mathbf{S}}\mathbf{X}\|_{F}^{2}$ \cite{aragam2015learning}. 
Such treatment is widely adopted in DAG settings, where $ \log\!\det (\hat{\boldsymbol{\Sigma}}_\mathbf{x})$ is equal to  $ \log\!\det ({\boldsymbol{\Sigma}}_\mathbf{e})$ and thus not related to $\hat{\mathbf{S}}$~\cite{id_sem_no_circle}. When $\boldsymbol{\Sigma}_{\mathbf{e}}$ is known, the log-determinant term reduces to a constant and can consequently be omitted \cite{DAG_notears}. 
For general graphs, however, eliminating the log–determinant term does not guarantee convergence to the correct solution, even in the undirected case (see Appendix~\ref{appendix:naive_proof}).

In this work, our aim is to develop strategies that bypass the complexities of optimizing the log-determinant term, without resorting to overly simplistic approximations. This preserves the potential to recover the correct network structure. In the next two sections, we present our covariance matching (CovMatch) framework for achieving this balance and derive a tractable optimization problem that can handle both undirected and directed graphs.

\section{Covariance Matching}

Instead of directly working with the likelihood, we adopt a covariance matching (CovMatch) approach. The motivation stems from the fact that, for a zero-mean Gaussian distribution, the covariance matrix fully characterizes the underlying distribution. Consequently, a natural idea is to align the sample covariance $\mathbf{C}_{\mathbf{x}}=\tfrac{1}{T}\mathbf{X}\mathbf{X}^\top$ with the theoretical covariance $\boldsymbol{\Sigma}_{\mathbf{x}}$. Recalling from \eqref{eq:cov_general} that
\begin{equation}
\boldsymbol{\Sigma}_{\mathbf{x}}
=
\mathbf{H}\boldsymbol{\Sigma}_{\mathbf{e}}\mathbf{H}^\top,
\quad
\mathbf{H}
=
(\mathbf{I}-\mathbf{S})^{-1},
\end{equation}
we aim to solve
\begin{equation}
\label{eq:general_covmatch}
\begin{aligned}
\mathbf{H}^*
=
\arg\min_{\hat{\mathbf{H}} \in \mathscr{H}}
\|\hat{\mathbf{H}}\boldsymbol{\Sigma}_{\mathbf{e}}\hat{\mathbf{H}}^\top 
- 
\mathbf{C}_{\mathbf{x}} \|_{F}^{2},
\end{aligned}
\end{equation}
where $\mathscr{H}$ denotes the feasible set of $\hat{\mathbf{H}}$ implied by the relevant graph structural assumptions (e.g., prohibiting self-loops, enforcing sparsity, requiring a DAG, guaranteeing positive weights, or imposing symmetry). By focusing on second-order statistics directly, this formulation provides a flexible alternative to log-determinant-based methods and can readily incorporate additional constraints as dictated by the application context.

Although the above objective remains non-trivial to solve in practice, it is worth noting that in the absence of any structural constraints, the best we can do is to estimate $\hat{\mathbf{H}}$ such that it exactly reconstructs $\mathbf{C}_{\mathbf{x}}$. Moreover, as $T \to \infty$, if the correct $\mathbf{H}$ (or equivalently $\mathbf{S}$) is identified, the objective converges to zero. 
This observation motivates a different perspective in which we reverse the roles of the objective and the constraints:
\begin{equation}
\label{eq:role_reversal}
\renewcommand{\arraystretch}{1.15}
\begin{array}{c@{\hspace{0.4em}}c@{\hspace{0.4em}}c}
\text{Classical} & & \text{CovMatch} \\[0.4em]
\begin{aligned}
\min_{\hat{\mathbf H}}\;\; & \big\|\hat{\mathbf H}\boldsymbol{\Sigma}_{\mathbf e}\hat{\mathbf H}^{\top}-\mathbf C_{\mathbf x}\big\|_F^{2} \\
\text{s.t.}\;\; & \hat{\mathbf H}\in\mathscr H
\end{aligned}
&
\ \rightarrow \ 
&
\begin{aligned}
\min_{\hat{\mathbf H}}\;\; & r_{\mathscr H}(\hat{\bf H}) \\
\text{s.t.}\;\; & \hat{\mathbf H}\boldsymbol{\Sigma}_{\mathbf e}\hat{\mathbf H}^{\top}=\mathbf C_{\mathbf x}
\end{aligned}
\end{array}
,
\end{equation}
where $r_{\mathscr H}(\hat{\bf H})$ represents a regularization term related to the constraint $\hat{\mathbf H}\in\mathscr H$.
In this formulation, instead of directly optimizing $\hat{\mathbf S}$ to minimize the covariance fitting term, we collect all matrices $\hat{\mathbf S}$ that exactly fit the sample covariance matrix and then select the one that best satisfies our desired constraints. Note that applying that same reversal of objective and constraints in the maximum likelihood formulation leads to an equivalent outcome (see Appendix~\ref{appendix:MLE}).

Inspired by the CovMatch perspective, we next examine what can be inferred from the equality
\begin{equation}
\hat{\mathbf{H}}\boldsymbol{\Sigma}_{\mathbf{e}}\hat{\mathbf{H}}^\top 
= 
\mathbf{C}_{\mathbf{x}},
\end{equation}
which holds when the objective is forced to be zero. We show how to map this condition into a more structured, lower-dimensional variable estimation problem, thereby simplifying the subsequent optimization. Specifically, this section handles the covariance matching objective for the undirected and directed cases separately, while deferring structural constraints to the next section. For clarity, we focus on a special case, i.e., $\boldsymbol{\Sigma}_{\mathbf{e}} = \mathbf{I}$, and relegate the discussion on the general scenario to Appendix~\ref{appendix:covmatch_general}.

\subsection{Undirected Graphs via Eigenvalue Decomposition}
\label{sec:undirected_case}

When the graph is undirected and the noise covariance is $\boldsymbol{\Sigma}_{\mathbf{e}} = \mathbf{I}$, our goal is to enforce $\hat{\mathbf{H}}^2 = \mathbf{C}_{\mathbf{x}}$. Since $\hat{\mathbf{H}}$ is symmetric, an eigenvalue decomposition (EVD) is the natural choice of decomposition. Hence, rather than estimating $\hat{\mathbf{H}}$ directly, we parameterize it via its eigenvalues and eigenvectors. Specifically, assuming $\hat{\mathbf{H}} = \hat{\mathbf{U}}\,
\mathrm{diag}(\hat{\boldsymbol{\lambda}})\,
\hat{\mathbf{U}}^\top$, we obtain $\hat{\mathbf{H}}^2 = \hat{\mathbf{U}}\,
\mathrm{diag}(\hat{\boldsymbol{\lambda}}^2)\,
\hat{\mathbf{U}}^\top$. 
Similarly, let 
$\mathbf{C}_{\mathbf{x}} = \mathbf{U}_{\mathbf{x}}\,
\mathrm{diag}(\boldsymbol{\lambda}_{\mathbf{x}})\,\mathbf{U}_{\mathbf{x}}^\top$
be its EVD. Matching $\hat{\mathbf{H}}^2$ to $\mathbf{C}_{\mathbf{x}}$ implies that we can set
$\hat{\mathbf{U}}=\mathbf{U}_{\mathbf{x}}$
and
$\hat{\boldsymbol{\lambda}}^2=\boldsymbol{\lambda}_{\mathbf{x}}$.
However, enforcing $\hat{\boldsymbol{\lambda}}^2 = \boldsymbol{\lambda}_{\mathbf{x}}$ introduces a sign ambiguity for each eigenvalue. To resolve it, we define a binary vector $\hat{\mathbf{q}}\in\{-1,1\}^N$  as an estimate of $\operatorname{sign}(\boldsymbol\lambda)$ and rewrite
$\,\hat{\boldsymbol{\lambda}} = \operatorname{diag}(\hat{\mathbf{q}})\,\boldsymbol{\lambda}_{\mathbf{x}}^{1/2}$, where $(\cdot)^{1/2}$ denotes the positive square root.
As a result, we obtain
\begin{equation}
\hat{\mathbf{H}}
=
\mathbf{U}_{\mathbf{x}}
\,\mathrm{diag}(\hat{\mathbf{q}})
\,\mathrm{diag}\bigl(\boldsymbol{\lambda}_{\mathbf{x}}^{1/2}\bigr)
\,\mathbf{U}_{\mathbf{x}}^\top,
\end{equation}
and substituting $\hat{\mathbf{H}}$ into 
$\hat{\mathbf{S}} = \mathbf{I}-\hat{\mathbf{H}}^{-1}$
yields
\begin{equation}
\label{eq:S_hat_undir}
\hat{\mathbf{S}}
=
\mathbf{I}
-
\mathbf{U}_{\mathbf{x}}
\,\mathrm{diag}(\hat{\mathbf{q}})
\,\mathrm{diag}\bigl(\boldsymbol{\lambda}_{\mathbf{x}}^{-1/2}\bigr)
\,\mathbf{U}_{\mathbf{x}}^\top,
\end{equation}
where we make use of the fact that $\hat{\mathbf{q}}^{-1} = \hat{\mathbf{q}}$.
Ultimately, the estimation task reduces to determining $\hat{\mathbf{q}}$, which captures the sign ambiguity of each eigenvalue. Note that when $T \rightarrow \infty$, we can only 
 identify a valid ground truth ${\bf U}$
(i.e., the eigenvectors of the ground truth ${\bf H}$) when ${\bf H}$ has no two non-zero eigenvalues that are negatives of each other, i.e., 
$\nexists\, i\neq j: \ \lambda_i = -\lambda_j$
(see Appendix~\ref{appendix:proof_of_identifiability}).

\subsection{Directed Graphs via Singular Value Decomposition}

A similar method is applied in the {directed} graph scenario, where
$\hat{\mathbf{H}} = (\mathbf{I} - \hat{\mathbf{S}})^{-1}$
may be non-symmetric, yet we still enforce
$\hat{\mathbf{H}}\,\hat{\mathbf{H}}^\top = \mathbf{C}_{\mathbf{x}}$.
Now, let 
$\,\hat{\mathbf{H}} = \hat{\mathbf{U}}\,\mathrm{diag}(\hat{\boldsymbol{\lambda}})\,\hat{\mathbf{V}}^\top$
be its singular value decomposition (SVD), and let
$\,\mathbf{C}_{\mathbf{x}} = \mathbf{U}_{\mathbf{x}}\,\mathrm{diag}(\boldsymbol{\lambda}_{\mathbf{x}})\,\mathbf{U}_{\mathbf{x}}^\top$
be again its EVD. Then matching
$\hat{\mathbf{H}}\,\hat{\mathbf{H}}^\top = \hat{\mathbf{U}}\,\mathrm{diag}(\hat{\boldsymbol{\lambda}}^2)\,\hat{\mathbf{U}}^\top$ to
$\mathbf{U}_{\mathbf{x}}\,\mathrm{diag}(\boldsymbol{\lambda}_{\mathbf{x}})\,\mathbf{U}_{\mathbf{x}}^\top$ means we can select $\hat{\mathbf{U}} = \mathbf{U}_{\mathbf{x}}$
and
$\,\hat{\boldsymbol{\lambda}} = \boldsymbol{\lambda}_{\mathbf{x}}^{1/2}$.
We now obtain
\begin{equation}
\hat{\mathbf{H}}
=
\mathbf{U}_{\mathbf{x}}
\,\mathrm{diag}\bigl(\boldsymbol{\lambda}_{\mathbf{x}}^{1/2}\bigr)
\,\hat{\mathbf{V}}^\top,
\end{equation}
and substituting $\hat{\bf H}$ into
$\,\hat{\mathbf{S}} = \mathbf{I}-\hat{\mathbf{H}}^{-1}$
yields
\begin{equation}
\label{eq:S_hat_dir}
\hat{\mathbf{S}}
=
\mathbf{I}
-
\hat{\mathbf{V}}
\,\mathrm{diag}\bigl(\boldsymbol{\lambda}_{\mathbf{x}}^{-1/2}\bigr)
\,\mathbf{U}_{\mathbf{x}}^\top.
\end{equation}
Unlike the undirected case, singular values are nonnegative by design, so there is no separate sign ambiguity. Instead, any remaining freedom is subsumed in the orthogonal matrix factor $\hat{\mathbf{V}}$, creating a rotation ambiguity. On the one hand, this makes the directed scenario more challenging, but on the other hand, when $T \rightarrow \infty$, 
we can now always identify a valid ground truth ${\bf U}$ 
(i.e., the left singular vectors of the ground truth ${\bf H}$) and hence no further assumptions are required as in the undirected case (see Appendix \ref{appendix:proof_of_identifiability_2}).

\section{Incorporating Structural Constraints}
\label{sec:constraints}

Thus far, we have focused on matching the model covariance to the sample covariance without imposing explicit structural requirements. 
In practice, however, domain knowledge often dictates additional constraints: for example, no self-loops in neuronal connectivity graphs~\cite{rubinov2010complex}, sparse interactions in large-scale networks, where each node only connects to a limited subset of others~\cite{BarabasiNS}, directed acyclicity in causal inference models~\cite{pearl1998graphs}, or positive edge weights~\cite{spectemp}.  
In this section, we illustrate two representative constraints, each of which can be implemented via an appropriate penalty function $h(\hat{\mathbf{S}})$.

\subsection{No Self-Loops}
A common requirement in certain graph structures is the absence of self-loops, i.e., the diagonal entries of the adjacency matrix $\hat{\mathbf{S}}$ should be zero. 
Translating this hard constraint into an objective-based formulation, we can introduce the penalty term
\begin{equation}
\label{eq:diag}
h_{\mathrm{diag}}(\hat{\mathbf{S}})
=
\|\mathrm{diag}(\hat{\mathbf{S}})\|_{2}^{2},
\end{equation}
thereby encouraging the diagonal to be small without strictly forcing it to be zero. 

For the undirected graph problem, a diagonal constraint is often sufficient. Intuitively, we only need to resolve $N$ binary variables $\hat{ q}_i$ in  $\hat{\mathbf{S}}$ as defined in~\eqref{eq:S_hat_undir}, and enforcing $\mathrm{diag}(\hat{\mathbf{S}})=\mathbf{0}$ supplies $N$ equations. In Appendix~\ref{appendix:proof_of_identifiability}, we provide a rigorous proof of when and why this condition leads to a successful recovery when $T \to \infty$.

In the directed case, however, the impact of the diagonal constraint diminishes as the size of the graph increases. This is because the relevant orthogonal matrix $\hat{\bf V}$ in $\hat{\mathbf{H}}$ has $\frac{N(N-1)}{2}$ degrees of freedom, far exceeding the $N$ equations imposed by $\mathrm{diag}(\hat{\mathbf{S}}) = {\bf 0}$.

\subsection{Sparsity}
Many real-world graphs exhibit a limited number of connections between nodes. To promote sparsity, we simply introduce an $\ell_{1}$-type penalty on $\hat{\mathbf{S}}$:
\begin{equation}
\label{eq:sparse}
h_{\mathrm{sparse}}(\hat{\mathbf{S}})
=
\|\hat{\mathbf{S}}\|_{1},
\end{equation}
where $\|\cdot\|_{1}$ denotes the elementwise absolute sum. This penalty, similar in spirit to the Lasso in sparse regression \cite{friedman2008sparse}, shrinks small coefficients toward zero while retaining larger ones, thus favoring a solution with fewer nonzero entries.

Beyond a global $\ell_1$ penalty, one may also add a term that limits the number of active edges per node. For instance, augmenting the objective with a function of the form $\max_i \| [\hat{\mathbf{S}}]_{i,:}\|_{1}$ (or $\max_j \| [\hat{\mathbf{S}}]_{:,j}\|_{1}$) bounds the total connection strength emanating from (or incoming to) each node, thereby controlling the maximum number of nonzero edges per row or column in a relaxed sense.

\subsection{Combining Constraints with CovMatch}

Combining the structural constraints~\eqref{eq:diag} and~\eqref{eq:sparse} leads to a purely structural objective of the form

\begin{equation}
\label{eq:structural}
h(\hat{\mathbf{S}})
=
h_{\mathrm{diag}}(\hat{\mathbf{S}})
+
\alpha h_{\mathrm{sparse}}(\hat{\mathbf{S}})
,
\end{equation}
where $\alpha\ge0$ adjusts the relative importance of sparsity over hollowness (no self-loops). In fact, additional constraints can readily be incorporated within the same framework, such as enforcing positive edge weights, bounding node degrees, or imposing a DAG structure. However, even without these extra constraints, our method already performs very well; in the experiments, we illustrate this point using the DAG case as an example in Section~\ref{subsec:exp_dag}.

For an undirected graph,
combining $h(\hat{\mathbf{S}})$ with the covariance matching constraint~\eqref{eq:S_hat_undir}, we obtain the following optimization problem:
\begin{equation}
\label{eq:undirected_structural_opt}
\begin{aligned}
\min_{\hat{\mathbf{q}}  \in  \{-1,1\}^N}
&\quad
h\bigl(\hat{\mathbf{S}}\bigr)
\\
\text{subject to}
&\quad
\hat{\mathbf{S}}
=
\mathbf{I}
-
\mathbf{U}_{\mathbf{x}} 
\mathrm{diag}(\hat{\mathbf{q}}) 
\mathrm{diag}\!\bigl(\boldsymbol{\lambda}_{\mathbf{x}}^{-1/2}\bigr)
 \mathbf{U}_{\mathbf{x}}^\top.
\end{aligned}
\end{equation}

Similarly, for a directed graph, combining $h(\hat{\mathbf{S}})$ with the corresponding covariance matching constraint~\eqref{eq:S_hat_dir} yields
\begin{equation}
\label{eq:directed_structural_opt}
\begin{aligned}
\min_{\hat{\mathbf{V}}: \hat{\mathbf{V}}^\top\hat{\mathbf{V}}=\mathbf{I}}
&\quad
h\bigl(\hat{\mathbf{S}}\bigr)
\\
\text{subject to}
&\quad
\hat{\mathbf{S}}
=
\mathbf{I}
-
\hat{\mathbf{V}} 
\mathrm{diag}\bigl(\boldsymbol{\lambda}_{\mathbf{x}}^{-1/2}\bigr) 
\mathbf{U}_{\mathbf{x}}^\top.
\end{aligned}
\end{equation}

In the next sections, we will study these two problems in more detail.

\section{Covariance Matching for Undirected SEM}
\label{sec:undirected_solution}

We first focus on the {undirected} graph setting. 
Combining the purely structural objective $h(\hat{\mathbf{S}})$ from~\eqref{eq:structural} with the covariance matching constraint from \eqref{eq:S_hat_undir} yields
\begin{equation}
\label{eq:undirected_structural_opt_2}
\begin{aligned}
\min_{\hat{\mathbf{q}}  \in  \{-1,1\}^N}
&\quad
 h_{\mathrm{diag}}(\hat{\mathbf{S}})
+
\alpha h_{\mathrm{sparse}}(\hat{\mathbf{S}})
\\
\text{subject to}
&\quad
\hat{\mathbf{S}}
=
\mathbf{I}
-
\mathbf{U}_{\mathbf{x}} 
\mathrm{diag}(\hat{\mathbf{q}}) 
\mathrm{diag}\!\bigl(\boldsymbol{\lambda}_{\mathbf{x}}^{-1/2}\bigr)
 \mathbf{U}_{\mathbf{x}}^\top.
\end{aligned}
\end{equation}
Here, the only additional constraints we consider are the hollowness and sparsity requirements.

\subsection{Hollowness Constraint}
Substituting the expression for $\hat{\bf S}$ from~\eqref{eq:undirected_structural_opt_2} into $h_{\mathrm{diag}}(\hat{\mathbf{S}})$ from~\eqref{eq:diag}, we obtain
\begin{equation}
    \begin{aligned}
          & \|
    \mathrm{diag} (
      \mathbf{U}_{\mathbf{x}} 
      \mathrm{diag}(\hat{\mathbf{q}}) 
\mathrm{diag}(\boldsymbol{\lambda}_{\mathbf{x}}^{-1/2}) 
      \mathbf{U}_{\mathbf{x}}^\top
    )
    - \mathbf{1}
  \|_2^2  \\
  = & \|  ( {\mathbf{U}}_{\bf x} \circ  {\mathbf{U}}_{\bf x}  )    \text{diag}( {\boldsymbol \lambda}_{\bf x}^{-1/2} )  \hat{\bf q} - \mathbf{1}  \|_2^2,
    \end{aligned}
\end{equation}
where $\circ$ is the Hadamard product.
This equivalence follows from 
$\operatorname{diag}(\mathbf{A} \,\mathrm{diag}(\mathbf{d})\, \mathbf{A}^\top) = (\mathbf{A}\circ \mathbf{A})\,\mathbf{d}$, 
applied with $\mathbf{A}=\mathbf{U}_{\mathbf{x}}$ and $\mathbf{d}=\boldsymbol{\lambda}_{\mathbf{x}}^{-1/2}  \circ  \hat{\mathbf{q}}$.

Defining $\mathbf{W} = 
  (\mathbf{U}_{\mathbf{x}} \circ \mathbf{U}_{\mathbf{x}})
   \mathrm{diag}\bigl(\boldsymbol{\lambda}_{\mathbf{x}}^{-1/2}\bigr)$, the hollowness penalty reduces to 
\begin{equation}
\label{eq:binary_problem_UBQP_undirected}
  \|\mathbf{W} \hat{\mathbf{q}} - \mathbf{1} \|_2^2.
\end{equation}
Minimizing this term over $\hat{\mathbf{q}}  \in  \{-1,1\}^N$
hence results in an unconstrained binary quadratic program (UBQP) \cite{kochenberger2014unconstrained}.

\subsection{Sparsity Constraint}

Substituting the expression for $\hat{\bf S}$ from~\eqref{eq:undirected_structural_opt_2} into $h_{\mathrm{sparse}}(\hat{\mathbf{S}})$ from~\eqref{eq:sparse}, we obtain
\begin{equation}
    \begin{aligned}
        & \|
\mathbf{I}
 - 
\mathbf{U}_{\mathbf{x}} 
\mathrm{diag}(\hat{\mathbf{q}}) 
\mathrm{diag}\bigl(\boldsymbol{\lambda}_{\mathbf{x}}^{-1/2}\bigr) 
\mathbf{U}_{\mathbf{x}}^\top \|_{1} \\
= & \| ( (
\mathrm{diag}\bigl(\boldsymbol{\lambda}_{\mathbf{x}}^{-1/2}\bigr)
\mathbf{U}_{\mathbf{x}}^\top )
\odot
\mathbf{U}_{\mathbf{x}} ) \hat{\bf q} - \mathrm{vec}({\bf I}) \|_1,
    \end{aligned}
\end{equation}
where $\odot$ is the Khatri-Rao product.
Let $\mathbf{M} =  (
\mathrm{diag} (\boldsymbol{\lambda}_{\mathbf{x}}^{-1/2} )
\mathbf{U}_{\mathbf{x}}^\top ) \odot \mathbf{U}_{\mathbf{x}} $ , then the sparsity penalty simplifies to
\begin{equation}\label{eq:sparsity_penalty}
\| \mathbf{M} \hat{\mathbf{q}}- \operatorname{vec}(\mathbf{I}) \|_{1}.
\end{equation}

To minimize this sparsity cost, let $\mathbf{t}\in\mathbb{R}^{N^{2}}_{+}$ be an auxiliary vector that upper bounds every
absolute deviation in~\eqref{eq:sparsity_penalty}.  
Replacing each $\ell_{1}$ term by two linear inequalities then yields the following mixed–integer
linear program (MILP) \cite{floudas2005mixed}

\begin{equation}
    \begin{aligned}
        \min_{\hat{\mathbf{q}},\,\mathbf{t}}
&\quad
\mathbf{1}^{\top}\mathbf{t}\\
\text{s.t.}\;&\quad
-\mathbf{t}\;\le\;
\mathbf{M}\hat{\mathbf{q}}-\operatorname{vec}(\mathbf{I})
\;\le\;\mathbf{t},\\
&\quad
\mathbf{t}\ge\mathbf{0},\\
&\quad
\hat{q}_{i}\in\{-1,1\},
\quad i=1,\dots,N.
\label{prob:MILP_bin}
    \end{aligned}
\end{equation}

\subsection{Conic Mixed--Integer Reformulation}

Combining the hollowness and sparsity costs, our composite objective is
\begin{equation}
\min_{\hat{\mathbf{q}}\in\{-1,1\}^N}
\quad
\|\mathbf{W} \hat{\mathbf{q}} - \mathbf{1} \|_2^2
 + 
\alpha  \|\mathbf{M} \hat{\mathbf{q}} - \operatorname{vec}(\mathbf{I}) \|_1.
\end{equation}

This composite objective combines a squared $\ell_{2}$ penalty with an $\ell_{1}$ penalty, both of which admit conic representations.  Specifically, the squared  Euclidean norm can be embedded in a rotated second–order cone (RSOC) \cite{alizadeh2003second}, while the $\ell_{1}$ norm corresponds to a non-negative orthant cone after the introduction of auxiliary variables.  Hence the entire problem can be written as a {conic mixed–integer program} (CMIP) \cite{atamturk2011lifting}.

To see this formally, introduce a scalar $s\in\mathbb{R}_{+}$ and a vector $\mathbf{t}\in\mathbb{R}^{N^{2}}_{+}$ such that
$\|\mathbf{W}\hat{\mathbf{q}}-\mathbf{1}\|_{2}^2 \leq s
$ and $-\mathbf{t}\leq\mathbf{M}\hat{\mathbf{q}}-\operatorname{vec}(\mathbf{I})\leq\mathbf{t}.$

The quadratic bound is equivalent to the
RSOC condition $\bigl(s,\tfrac12,\mathbf{W}\hat{\mathbf{q}}-\mathbf{1}\bigr) \;\in\;\mathcal{K}_{\mathrm{rSOC}}$ with
\begin{equation}
\label{eq:rsoc_embed}
\mathcal{K}_{\mathrm{rSOC}}
=\;
\Bigl\{(u,v,\mathbf{z})\in\mathbb{R}_{+}^{2}\times\mathbb{R}^{N}:
\,2uv\;\ge\;\|\mathbf{z}\|_{2}^{2}\Bigr\},
\end{equation}
because setting $v=1/2$ yields the scalar inequality
$s\;\ge\;\|\mathbf{W}\hat{\mathbf{q}}-\mathbf{1}\|_{2}^{2}$.  The
element-wise bounds based on $\mathbf{t}$ on the other hand can be formulated as
$\bigl(\mathbf{t},
\mathbf{M}\hat{\mathbf{q}}-\operatorname{vec}(\mathbf{I})\bigr) \; \in \; \mathcal{K}_{1} $ with $\mathcal{K}_{1} $ the $\ell_{1}$ cone given by
\[
\mathcal{K}_{1}
=\Bigl\{(\mathbf{u},\mathbf{v})\in
\mathbb{R}_{+}^{N^{2}}\times\mathbb{R}^{N^{2}}:
-u_{i}\le v_{i}\le u_{i}\Bigr\}.
\]

Collecting these ingredients, the undirected SEM inference task is
equivalent to the following CMIP problem:
\begin{equation}
    \begin{aligned}
        \min_{\hat{\mathbf{q}},\,s,\,\mathbf{t}}
&\quad
s+\alpha\,\mathbf{1}^{\top}\mathbf{t}\\[2pt]
\text{s.t.}\;&\quad
\bigl(s,\tfrac12,\mathbf{W}\hat{\mathbf{q}}-\mathbf{1}\bigr)
\in\mathcal{K}_{\mathrm{rSOC}},\\[2pt]
&\quad
\bigl(\mathbf{t},\mathbf{M}\hat{\mathbf{q}}-\operatorname{vec}(\mathbf{I})\bigr)\in\mathcal{K}_{1},\\[2pt]
&\quad
\hat{q}_{i}\in\{-1,1\},\quad i=1,\dots,N.\label{prob:mir_socp_bin}
    \end{aligned}
\end{equation}

Modern solvers \cite{mosek,gurobi} handle
a CMIP via branch-and-bound enhanced with conic dual bounds and
cutting planes to achieve global optimality.

\section{Covariance Matching for Directed SEM}
\label{sec:directed_extension}

We now turn to the {directed} graph setting, where $\hat{\mathbf{S}}$ (and thus $\hat{\bf H}$) may be non-symmetric. Combining the structural objective $h(\hat{\mathbf{S}})$ from~\eqref{eq:structural} with the covariance matching constraint~\eqref{eq:S_hat_dir} leads to
\begin{equation}
\label{eq:directed_structural_opt_2}
\begin{aligned}
\min_{\hat{\mathbf{V}}: \hat{\mathbf{V}}^\top \hat{\mathbf{V}}=\mathbf{I}}
&\quad
 h_{\mathrm{diag}}(\hat{\mathbf{S}})
+
\alpha h_{\mathrm{sparse}}(\hat{\mathbf{S}})
\\
\text{subject to}
&\quad
\hat{\mathbf{S}}
=
\mathbf{I}
-
\hat{\mathbf{V}}
 \mathrm{diag}(\boldsymbol{\lambda}_{\mathbf{x}}^{-1/2})
 \mathbf{U}_{\mathbf{x}}^\top.
\end{aligned}
\end{equation}

A semidefinite relaxation approach can then be applied using the Russo-Dye theorem~\cite{russo1966note}. While this method often provides a good solution, it requires solving a nontrivial convex optimization problem, which is challenging for large $N$.

\subsection{Riemannian Gradient Descent on the Orthogonal Group}

An alternative is to treat $\hat{\mathbf{V}}$ as a point on the orthogonal group $\mathcal{O}(N)$ and minimize the objective via Riemannian gradient descent. 
We initialise $\hat{\mathbf V}^{(0)}$ with a random orthogonal matrix~\cite{mezzadri2006generate}, i.e., a sample that is uniform on $\mathcal O(N)$ in the sense of the left– and right–invariant Haar measure.  The same initialization is used whenever a random orthogonal matrix is required later (Algorithms~\ref{alg:basinhop_riemann}, \ref{alg:basinhop_riemann_multi}, \ref{alg:sample_rotation}).
To do Riemannian gradient descent, at each iteration we compute the Euclidean (sub)gradient of the objective in $\mathbb{R}^{N\times N}$ and then project it onto the tangent space of the orthonormal manifold~\cite{unitary_opt} (for details on derivative computation in Euclidean space, see Appendix~\ref{appendix:derivative}). Algorithm~\ref{alg:riem_update} outlines a typical procedure.

\begin{algorithm}[H]
\caption{\textsc{RiemannGD}}
\label{alg:riem_update}
\begin{algorithmic}[1]
\State \textbf{Input:} random orthogonal matrix $\hat{\mathbf{V}}^{(0)}\!\in\!\mathcal{O}(N)$~\cite{mezzadri2006generate}
\State \textbf{Output:} approximate minimiser $\hat{\mathbf{V}}^{(+)}$
\State \textbf{Hyperparameters:} step sizes $\{\mu_r\}$, \# iterations $R$
\State $\hat{\mathbf{V}}_{0} \gets \hat{\mathbf{V}}^{(0)}$
\For{$r = 0$ \textbf{to} $R-1$}
  \State $\boldsymbol{\Gamma}_{r}
        \gets \dfrac{\partial\mathcal{J}}{\partial\hat{\mathbf{V}}}
          \bigl(\hat{\mathbf{V}}_{r}\bigr)$ \algorithmiccomment{Euclidean gradient}
  \State $\mathbf{G}_{r}
        \gets \boldsymbol{\Gamma}_{r}\hat{\mathbf{V}}_{r}^{\top}
        -\hat{\mathbf{V}}_{r}\boldsymbol{\Gamma}_{r}^{\top}$ \algorithmiccomment{Project to tangent space}
  \State $\mathbf{G}_{r}\gets \mathbf{G}_{r}/\|\mathbf{G}_{r}\|_{F}$ \algorithmiccomment{Normalize search direction}
  \State $\mathbf{P}_{r}\gets \exp(-\mu_r\mathbf{G}_{r})$ \algorithmiccomment{Exponential map}
  \State $\hat{\mathbf{V}}_{r+1}\gets \mathbf{P}_{r}\hat{\mathbf{V}}_{r}$ \algorithmiccomment{Update}
\EndFor
\State $\hat{\mathbf{V}}^{(+)}\gets\hat{\mathbf{V}}_{R}$
\State \Return $\hat{\mathbf{V}}^{(+)}$
\end{algorithmic}
\end{algorithm}

Here the objective function regarding $\hat{\bf V}$ is
\begin{align}
\label{eq:obj_J}
\mathcal{J}(\hat{\mathbf{V}})
&= h\!\Bigl(\mathbf{I}
   -\hat{\mathbf V}\,\mathrm{diag}(\boldsymbol{\lambda}_{\mathbf x}^{-1/2})
   \mathbf U_{\mathbf x}^{\top}\Bigr) \notag\\
&= \Big\|\mathrm{diag}\!\Bigl(
   \hat{\mathbf V}\,\mathrm{diag}(\boldsymbol{\lambda}_{\mathbf x}^{-1/2})
   \mathbf U_{\mathbf x}^{\top}-\mathbf I\Bigr)\Big\|_F^2 \notag\\
&\quad + \alpha\Big\|
   \hat{\mathbf V}\,\mathrm{diag}(\boldsymbol{\lambda}_{\mathbf x}^{-1/2})
   \mathbf U_{\mathbf x}^{\top}-\mathbf I\Big\|_1 .
\end{align}
In Algorithm~\ref{alg:riem_update}, $\boldsymbol{\Gamma}_{r}$ is the Euclidean gradient of the objective, while $\mathbf{G}_{r}=\boldsymbol{\Gamma}_{r}\hat{\mathbf{V}}_{r}^{\!\top}-\hat{\mathbf{V}}_{r}\boldsymbol{\Gamma}_{r}^{\!\top}$ is its projection onto the Stiefel manifold, hence the true Riemannian gradient.  Directly stepping along $\mathbf{G}_{r}$ would generally destroy orthonormality, so we convert this tangent direction into the rotation $\mathbf{P}_{r}=\exp(-\mu\mathbf{G}_{r})$, which stays on the manifold for any step size $\mu>0$.  Updating $\hat{\mathbf{V}}_{r+1}=\mathbf{P}_{r}\hat{\mathbf{V}}_{r}$ therefore performs descent without leaving the feasible set.  Notably, if the exponential is replaced by its first–order Taylor expansion $\mathbf{I}-\mu\mathbf{G}_{r}$, the rule reduces to ordinary gradient descent. 
Our sole deviation from the standard scheme \cite{unitary_opt} is to normalize $\mathbf{G}_r$ by its Frobenius norm before the exponential map. As $\mathcal{J}$ can vary by orders of magnitude, this makes $\mu$ directly control the geodesic step length and prevents gradient spikes from causing excessive rotations. 
Following common practice in machine learning, we use a time–varying stepsize with exponential decay starting from an initial value $\mu_0$. The concrete choices of decay and initial value used in our runs are reported in Section~\ref{sec:exp}.

However, while convex optimization techniques can circumvent local minima, manifold optimization methods typically do not enjoy such benefits. A straightforward remedy is to try multiple initial guesses, hoping that at least one of them leads to a good local minimum. Since the unitary manifold is pretty large, most random initializations are ineffective though, rendering this naive multi-start approach prohibitively inefficient.

\subsection{Basin-Hopping and Escaping Local Minima}

To escape local minima, we take inspiration from Basin-hopping techniques~\cite{wales1997global}. Specifically, once the algorithm converges to a local minimum, we propose to sample around that local minimum, as summarised in Algorithm~\ref{alg:basinhop_riemann}. The concrete sampling strategy used for these perturbations is detailed in Algorithm~\ref{alg:sample_rotation} at the end of this section. This local sampling procedure raises the possibility of escaping to another region of the manifold that may admit a solution with lower cost, thus offering a middle ground between random restarts and purely local manifold search.

\begin{algorithm}[H]
\caption{Basin–Hopping Scheme}
\label{alg:basinhop_riemann}
\begin{algorithmic}[1]
\State \textbf{Input:} random orthogonal matrix $\hat{\mathbf{V}}^{(0)} \in \mathcal{O}(N)$~\cite{mezzadri2006generate}
\State \textbf{Output:} best solution $\hat{\mathbf{V}}^\star$
\State \textbf{Hyperparameters:} \# samples $K$
\State $\hat{\mathbf{V}}^\star \gets \hat{\mathbf{V}}^{(0)}$ \algorithmiccomment{initialise incumbent}
\For{$k = 1$ \textbf{to} $K$}
  \State $\hat{\mathbf{V}}^{(0)} \gets \textsc{Sample}\!\left(\hat{\mathbf{V}}^\star\right)$ \algorithmiccomment{random local perturbation}
  \State $\hat{\mathbf{V}}^{(+)} \gets \textsc{RiemannGD}\!\left(\hat{\mathbf{V}}^{(0)}\right)$ \algorithmiccomment{local refinement}
  \If{$\mathcal{J}\!\left(\hat{\mathbf{V}}^{(+)}\right) < \mathcal{J}\!\left(\hat{\mathbf{V}}^\star\right)$}
    \State $\hat{\mathbf{V}}^\star \gets \hat{\mathbf{V}}^{(+)}$ \algorithmiccomment{accept if better}
  \EndIf
\EndFor
\State \Return $\hat{\mathbf{V}}^\star$
\end{algorithmic}
\end{algorithm}

Although the above single-threaded basin-hopping loop can usually lead to a satisfactory solution, it may be computationally slow. Noticing that we can generate multiple candidates of $\hat{\mathbf{V}}^{(0)}$ in parallel, we adopt a multi-threaded strategy for greater efficiency as  described in Algorithm~\ref{alg:basinhop_riemann_multi}.

\begin{algorithm}[H]
\caption{Multi-Threaded Basin Hopping}
\label{alg:basinhop_riemann_multi}
\begin{algorithmic}[1]
\State \textbf{Input:} random orthogonal matrix $\hat{\mathbf V}^{(0)}\in\mathcal O(N)$~\cite{mezzadri2006generate}
\State \textbf{Output:} best solution $\hat{\mathbf V}^\star$
\State \textbf{Hyperparameters:} \# cycles $K$, \# samples per cycle $L$
\State $\hat{\mathbf V}^\star \gets \hat{\mathbf V}^{(0)}$ \algorithmiccomment{initialise incumbent}
\For{$k=1$ \textbf{to} $K$}
  \State $\mathcal S \gets \{\textsc{Sample}(\hat{\mathbf V}^\star)\}_{\ell=1}^{L}$ \algorithmiccomment{generate sample set}
  \State $\mathcal R \gets \emptyset$ \algorithmiccomment{initialise refined candidates}
  \ForAll{$\hat{\mathbf V}^{(0)}\in\mathcal S$ \textbf{in parallel}}
    \State $\hat{\mathbf V}^{(+)} \gets \textsc{RiemannGD}(\hat{\mathbf V}^{(0)})$ \algorithmiccomment{local refinement}
    \State $\mathcal R \gets \mathcal R \cup \{\hat{\mathbf V}^{(+)}\}$ \algorithmiccomment{update refined set}
  \EndFor
  \State $\hat{\mathbf V}^{(+)}_{\text{best}} \gets \argmin_{\mathbf V\in\mathcal R}\, \mathcal J(\mathbf V)$ \algorithmiccomment{best of this cycle}
  \If{$\mathcal J(\hat{\mathbf V}^{(+)}_{\text{best}}) < \mathcal J(\hat{\mathbf V}^\star)$}
    \State $\hat{\mathbf V}^\star \gets \hat{\mathbf V}^{(+)}_{\text{best}}$ \algorithmiccomment{accept if better}
  \EndIf
\EndFor
\State \Return $\hat{\mathbf V}^\star$
\end{algorithmic}
\end{algorithm}

We have empirically observed that, due to the sole reliance on the sparsity of the adjacency matrix, the local minimum with the smallest loss need not yield the smallest discrepancy between the recovered and ground–truth adjacencies. 
Consequently, strictly choosing the solution $\hat{\mathbf{V}}^{(+)}_{\text{best}}$ that attains the minimum cost among parallel local searches may discard other candidates with more favorable error.

To address this drawback, we maintain a {candidate set} that is updated after every iteration. As described in Algorithm~\ref{alg:basinhop_riemann_multi_candidates}, rather than drawing perturbations solely from the current best matrix $\hat{\mathbf{V}}^\star$, we sample around a matrix chosen randomly from the candidate set. The newly sampled matrices are merged with the existing candidate set, and we apply \textsc{RiemannGD} to each matrix in the combined set. Retaining earlier candidates prevents high-quality solutions from being discarded and encourages diversity, even when some candidates do not currently attain the lowest cost.

In updating the candidate set, we do not simply take the $L'$ matrices with the smallest objective values, as this would result in many nearly identical solutions. Instead, a diversity criterion is enforced so that only sufficiently distinct candidates are retained. The detailed procedure is given in Appendix~\ref{appendix:candidate_update}.

\begin{algorithm}[H]
\caption{Multi-Threaded Basin Hopping with Candidate Set}
\label{alg:basinhop_riemann_multi_candidates}
\begin{algorithmic}[1]
\State \textbf{Input:} random candidate set $\mathcal C^{(0)}\subseteq\mathcal O(N)$~\cite{mezzadri2006generate}
\State \textbf{Output:} best solution $\hat{\mathbf V}^\star$
\State \textbf{Hyperparameters:} \# cycles $K$, \# samples per cycle $L$, candidate-set size $L'$
\State $\mathcal C \gets \mathcal C^{(0)}$ \algorithmiccomment{initialise candidate set}
\For{$k=1$ \textbf{to} $K$}
  \State $\mathcal S \gets \emptyset$ \algorithmiccomment{initialise sample set}
  \For{$\ell=1$ \textbf{to} $L$}
     \State $\hat{\mathbf{V}}_{c} \gets \textsc{RandomChoice}(\mathcal{C})$  \algorithmiccomment{select a candidate}
     \State $\hat{\mathbf V}^{(0)} \gets \textsc{Sample}(\hat{\mathbf{V}}_{c} )$ \algorithmiccomment{apply  perturbation}
     \State $\mathcal S \gets \mathcal S \cup \{\hat{\mathbf V}^{(0)} \}$ \algorithmiccomment{update sample set}
  \EndFor
  \State $\mathcal R \gets \emptyset$
  \ForAll{$\hat{\mathbf V}^{(0)}\in \mathcal C\cup\mathcal S$ \textbf{in parallel}}
     \State $\hat{\mathbf V}^{(+)} \gets \textsc{RiemannGD}(\hat{\mathbf V}^{(0)})$
     \algorithmiccomment{local refinement}
     \State $\mathcal R \gets \mathcal R \cup \{\hat{\mathbf V}^{(+)}\}$\algorithmiccomment{update refined set}
  \EndFor
  \State $\mathcal{C} \gets \left\{ \hat{\mathbf{V}} \in \mathcal{R} \,:\, L' \ \text{most diverse and low cost} \right\}$%
  
\hfill$\triangleright$~update candidate set
\EndFor
\State $\hat{\mathbf V}^\star \gets \argmin_{\hat{\mathbf V}\in\mathcal C}\,\mathcal J(\hat{\mathbf V})$  \algorithmiccomment{best in candidate set}
\State \Return $\hat{\mathbf V}^\star$
\end{algorithmic}
\end{algorithm}

\subsection{Sampling on the Orthogonal Group}

A key step in our algorithms involves sampling orthogonal matrices in a manner that is not uniformly distributed over the orthogonal group, but instead biased toward moderate rotations. Equivalently, given $\mathbf{V}$ we perturb it by a moderate random rotation: sample an orthogonal $\mathbf{U}'$ concentrated near $\mathbf{I}$ and set $\mathbf{V}'=\mathbf{U}'\mathbf{V}$. Thus the problem reduces to specifying how to generate $\mathbf{U}'$ near $\mathbf{I}$.

Starting from a random orthogonal matrix $\mathbf{U}\in\mathcal{O}(N)$ ~\cite{mezzadri2006generate}, write
\begin{equation}
    \mathbf{U}=\exp\!\bigl(\log(\mathbf{U})\bigr).
\end{equation}
When $\det(\mathbf{U})>0$, the set $\{\exp(\tau\,\log(\mathbf{U})):\,\tau\in[0,1]\}$ is the shortest geodesic connecting $\mathbf{I}$ and $\mathbf{U}$~\cite{moakher2002means}. Hence, drawing $\tau\in[0,1]$ and setting $\mathbf{U}'=\exp(\tau\,\log(\mathbf{U}))$ samples exactly along this geodesic, shrinking all rotation angles by a factor $\tau$. When $\det(\mathbf{U})<0$, we flip the sign of the first column of $\mathbf{U}$ to make the logarithm real, compute $\mathbf{U}'=\exp(\tau\,\log(\mathbf{U}))$, and then again flip the first column of $\mathbf{U}'$ so that $\det(\mathbf{U}')<0$ remains possible. This preserves orthogonality and allows sampling from either component of $\mathcal{O}(N)$. The complete procedure is summarized in Algorithm~\ref{alg:sample_rotation}.

\begin{algorithm}[H]
\caption{\textsc{Sample}}
\label{alg:sample_rotation}
\begin{algorithmic}[1]
\State \textbf{Input:} orthogonal matrix $\mathbf{V}$
\State \textbf{Output:} sampled orthogonal matrix $\mathbf{V}'$
\State \textbf{Hyperparameters:} $\tau_{\min},\tau_{\max}$ with $0\le\tau_{\min}\le\tau_{\max}\le1$
\State $\mathbf{U}_{\text{raw}} \gets \textsc{RandomOrthogonal}({\bf I})$ \algorithmiccomment{select random orthogonal matrix~\cite{mezzadri2006generate}}
\State $\mathbf{U} \gets \mathbf{U}_{\text{raw}}$
\If{$\det(\mathbf{U}) < 0$}
  \State $\mathbf{U}_{:,1} \gets -\,\mathbf{U}_{:,1}$ \algorithmiccomment{force $\det(\mathbf{U})=+1$ }
\EndIf
\State $\tau \sim \mathcal{U}(\tau_{\min},\tau_{\max})$ \algorithmiccomment{geodesic step size}
\State $\mathbf{U}' \gets \exp(\tau\  \log\mathbf{U})$ \algorithmiccomment{step $\tau$ along geodesic}
\If{$\det(\mathbf{U}_{\text{raw}}) < 0$}
  \State $\mathbf{U}'_{:,1} \gets -\,\mathbf{U}'_{:,1}$ \algorithmiccomment{restore original $O(N)$ parity}
\EndIf
\State $\mathbf{V}' \gets \mathbf{U}'\,\mathbf{V}$ \algorithmiccomment{apply perturbation}
\State \Return $\mathbf{V}'$
\end{algorithmic}
\end{algorithm}

\section{Experiments}
\label{sec:exp}

We evaluate four settings: (i) undirected synthetic graphs, (ii) DAGs with established baselines, (iii) general directed (cyclic) graphs, and (iv) real data.

Across all simulations, we report the average normalised squared error (NSE)
\begin{equation}
\operatorname{NSE}\bigl(\mathbf{S},\mathbf{S}^{\!*}\bigr)
=
\frac{\|\mathbf{S}^{\!*}-\mathbf{S}\|_F^{2}}{\|\mathbf{S}\|_F^{2}},
\end{equation}
where $\mathbf{S}$ is the ground-truth adjacency and ${\mathbf{S}}^*$ its estimate. 

For the $N{=}11$ T–cell protein–expression data of \cite{sachs2005causal} (14 experimental conditions; $707$–$927$ samples each), we follow \cite{amendola2020structure} to handle non-Gaussianity. We then fit our estimator separately to each condition and compare the resulting network with that obtained by \cite{amendola2020structure}.

\subsection{Simulation on Undirected Graphs}
We first focus on the symmetric setting.  For each graph size $N \in \{20,40,60,80,100\}$, we draw an undirected topology with $M=2N$ edges and assign weights uniformly i.i.d. from $[-1,-0.1]\cup[0.1,1]$. 
Our estimator is benchmarked against the {SpecTemp}~\cite{spectemp} and {SigMatch}~\cite{aragam2015learning} baselines with $T=1000$ and $T \to \infty$.  Because {SpecTemp} is only applicable when all edge weights are nonnegative, for this method, the edge weights are sampled from $[0.1,1]$.
For every $N$, we generate ten graphs, apply each competing method, and report the {average NSE}

For the SpecTemp baseline, we solve
   
\begin{equation}
\begin{aligned}
\min_{\hat{\mathbf S},\,\mathbf q}\;
  & \alpha \|\hat{\mathbf S}\|_{1} +  \|\hat{\mathbf S}-\hat{\mathbf S}'\|_F \\
\text{subject to}\quad
  & \hat{\mathbf S}'=\hat{\mathbf U}_{{\bf x}} \operatorname{diag}(\mathbf q)\hat{\mathbf U}^\top_{{\bf x}},\;
    \hat{\mathbf S}\in\mathscr S,
\end{aligned}
\end{equation}
where $\hat{\mathbf U}$ is the eigenvector matrix of the sample covariance matrix ${\bf C}_{\bf x}$.  
The feasible set is given by $\mathscr S=\{\hat{\mathbf S}\in\mathbb R^{N\times N}\mid
\hat{\mathbf S}^\top=\hat{\mathbf S},\;
\hat{\mathbf S}\ge 0,\;
\operatorname{diag}(\hat{\mathbf S})=0,\;
\hat{\mathbf S}\mathbf 1\ge \mathbf 1\}$ follows~\cite{buciulea2025polynomial}.  
Here, the distance bound of $\|\hat{\mathbf S}-\hat{\mathbf S}'\|_F$ is included in the objective to avoid solver failure.  
Because SpecTemp has a scale ambiguity, we rescale both ${\mathbf S}^\star$ and $\hat{\mathbf S}$ by their Frobenius norms before computing the NSE.

SigMatch is obtained by solving
\begin{equation}
\min_{\hat{\mathbf S}}\;
    \tfrac1T\|\mathbf X-\hat{\mathbf S}\mathbf X\|_F^2
    +\alpha\|\hat{\mathbf S}\|_{1}
   =
  \operatorname{tr}(\hat{\mathbf S}\mathbf C_{\bf x}\hat{\mathbf S}^\top)
  -2\,\operatorname{tr}(\hat{\mathbf S}\mathbf C_{\bf x})
  +\alpha\|\hat{\mathbf S}\|_{1},
\end{equation}
where ${\bf C_x}=\tfrac1T\mathbf X\mathbf X^{\top}$ for $T=1000$, and which equals $\bf \Sigma_x$  when $T\to\infty$.

For CovMatch, we solve Problem~\eqref{prob:mir_socp_bin}.
For all methods, $\alpha$ controls the importance of sparsity.
To select an appropriate $\alpha$, for each $T$ and each method, we perform a grid search over $\alpha$ and select the best value before testing on the generated graphs.

\begin{figure}[H]
  \centering
  \hspace*{4mm}
  \includegraphics[width=0.9\linewidth]{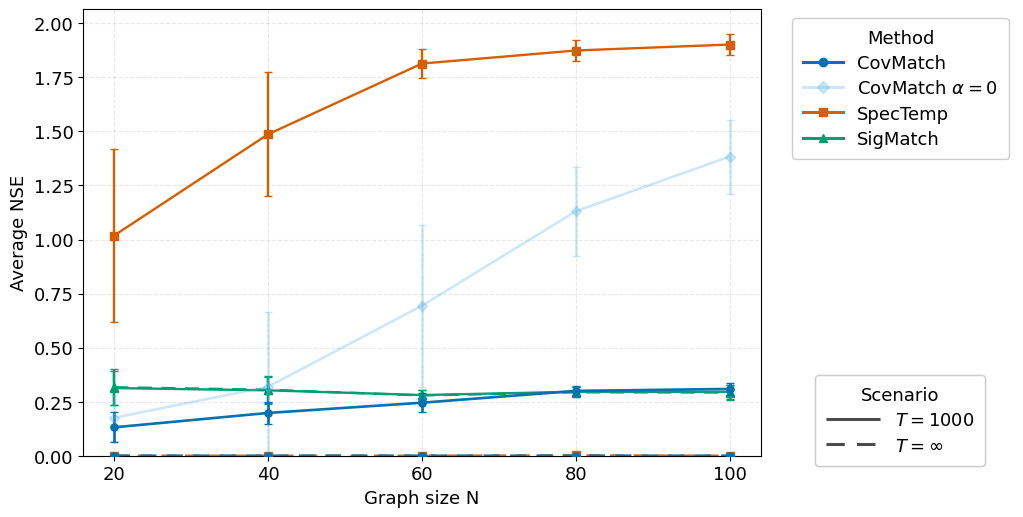}
  \caption{Average NSE versus number of nodes}
  \label{fig:undir_exp}
\end{figure}

Figure~\ref{fig:undir_exp} reports the average NSE versus graph size $N$ for two sample sizes, $T{=}1000$ (solid) and $T{\to}\infty$ (dashed). We compare CovMatch, SpecTemp, SigMatch, and {CovMatch with $\alpha{=}0$}, which removes the sparsity term and coincides with the ICASSP variant of CovMatch that enforces only hollowness~\cite{ICASSP_covmatch}.
Because its accuracy hinges on reliably recovering the eigenbasis of the sample covariance, stable estimation demands a very large observation horizon; consequently, for moderate $T$ its error surpasses that of {SigMatch}.  By introducing an explicit sparsity prior, our method avoids producing overly dense graphs, achieving markedly lower errors across practical horizons while preserving the asymptotic guarantee that the error tends to zero as $T \to \infty$.

\subsection{Simulation on DAGs}
\label{subsec:exp_dag}

For general directed graphs, there is no widely acknowledged baseline, because most existing algorithms specialize in particular structural assumptions.  To provide a meaningful comparison, we therefore first restrict attention to DAGs and benchmark against the two very popular causal–discovery methods, {NOTEARS}~\cite{DAG_notears} and {DAGMA}~\cite{bello2022dagma}.   We subsequently evaluate our estimator on cyclic directed graphs as well, see Subsection~\ref{subsec:sim_random_directed}.

Following the public implementation of {NOTEARS}, we generate ten DAGs for every graph size $N \in \{20,40,60,80,100\}$.  
Each DAG is obtained by first drawing a random undirected Erdős–Rényi graph with edge probability \(p = 2/N\). Then we only maintain the lower‑triangular part of the adjacency matrix and randomly permute the vertex labels to eliminate any inherent topological ordering. 
Edge weights are then uniformly drawn from the default range $[-2,-0.5]\cup[0.5,2]$ used in NOTEARS.

Across all simulation experiments that apply CovMatch to directed graphs, we set the sparsity weight to $\alpha = 10^{-2}$. The role of $\alpha$ is simply to promote sparsity once hollowness (zero diagonal) is enforced. In Algorithm~\ref{alg:basinhop_riemann_multi_candidates}, we set $L=L'=64$, matching the number of CPU cores, and $K=200$.  To improve efficiency, if the objective does not change for more than $35$ iterations, we terminate early and return the current best candidate. In Algorithm~\ref{alg:sample_rotation}, we use $\tau_{\min}=0.5$ and $\tau_{\max}=0.8$, while Algorithm~\ref{alg:riem_update} contains $R = 10^{4}$ iterations.   For the \textsc{RiemannGD} launched from the random sample set $\mathcal{S}$ the Riemannian step size $\mu_{r}$ decays exponentially from $2\times10^{-2}$ to $4\times10^{-3}$; when the start point is taken from the previous candidate set $\mathcal{C}$ it decays from $10^{-3}$ to $2\times10^{-4}$.  This choice yields errors that are close to zero. In fact, choosing even smaller final step sizes can lower the error further at the expense of additional runtime.

For NOTEARS and DAGMA, the $\ell_{1}$ regularization weight $\alpha$ is selected separately for each $N$ and $T$ by grid search, and the chosen value is then applied when testing on the newly generated graphs.

\begin{figure}[H]
    \centering
    \includegraphics[width=8.3cm]{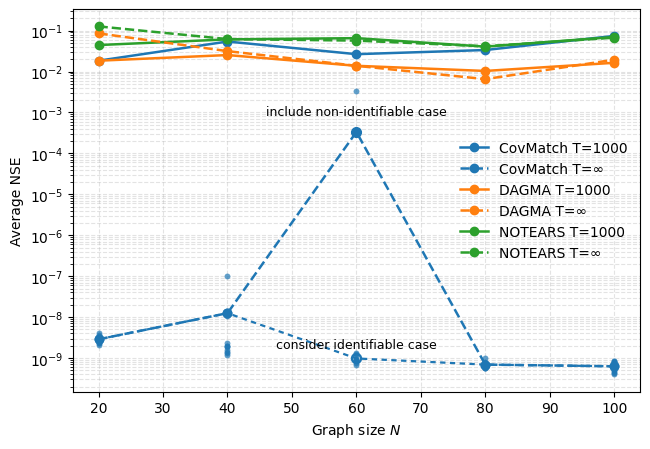}
    \caption{DAG benchmark. Average NSE (log scale) for CovMatch, NOTEARS, and DAGMA. Solid curves: $T=1000$; dotted markers: $T\to\infty$. The plot overlays results excluding and including a single non‑identifiable instance at $N=60$ (annotated in the figure).}
    \label{fig:dag-compare-both}
\end{figure}

\paragraph{Results}
Figure~\ref{fig:dag-compare-both} summarizes the results. 
In the asymptotic limit ($T\to\infty$), CovMatch drives the NSE down to (near) machine precision, whereas NOTEARS and DAGMA plateau around $10^{-2}\!-\!10^{-1}$. This consistency is attained without imposing acyclicity, highlighting the flexibility of our framework.

At the finite horizon $T=1000$, DAGMA achieves the smallest NSE across most $N$. CovMatch is competitive with NOTEARS: it performs better for smaller graphs ($N\le 60$) and is comparable or slightly worse for larger $N$. The finite‑sample gap is attributable to our reliance on the singular vectors of the sample covariance—estimating these accurately becomes harder as the dimensionality increases, so larger networks require longer time series to reach the same accuracy.

The comparison is conservative in that we report the raw output of CovMatch, whereas both NOTEARS and DAGMA perform a post‑hoc pruning step that removes edges with $|w_{ij}|<0.3$. Applying the same pruning to CovMatch further reduces its NSE (not shown).

\paragraph{Non‑identifiable case}
While generating random DAGs, we observed one anomalous instance at $N=60$ in which CovMatch attained a strictly lower objective value than the true adjacency. This indicates that a sparser graph can reproduce the same covariance, revealing fundamental non‑identifiability. Including this instance yields a visible spike at $N=60$ in the averages, but outside this case, the performance is stable across $N$. Even in the presence of such ambiguity, CovMatch attains the smallest error as $T\to\infty$ and often recovers graphs close to the ground truth.

\subsection{Simulation on Cyclic Directed Graphs}
\label{subsec:sim_random_directed}

This experiment investigates how CovMatch scales with network size on general (cyclic) directed graphs. 
For each target size \(N\in\{20,\,40,\,60,\,80,\,100\}\) we generate a graph by selecting \(M=2N\) directed edges uniformly at random, without replacement. 
Every selected edge receives a weight drawn uniformly from the union of intervals \([-1,-0.1]\cup[0.1,1]\).
If the resulting graph happens to be acyclic, it is discarded and regenerated so that the test set contains only challenging non-DAG topologies. We report average NSE across 10 graphs at $T\!\to\!\infty$ and at $T=1000$.

\begin{figure}[H]
    \centering
    \includegraphics[width=8.3cm]{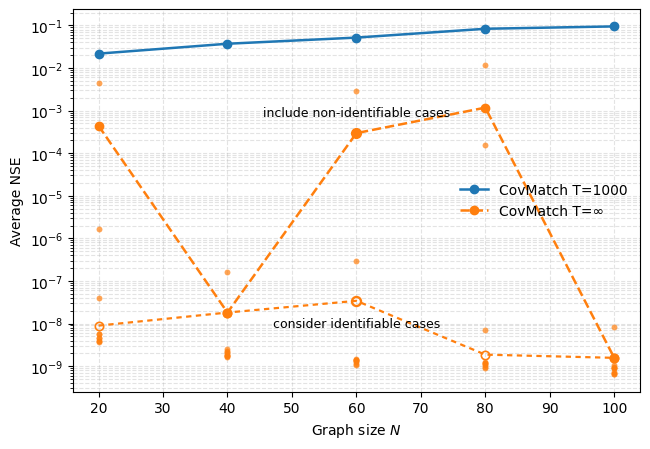}
    \caption{Cyclic directed graphs. Average NSE (log scale) for CovMatch.
    Solid curve: $T=1000$; dashed markers: $T\to\infty$.
    The plot overlays results \emph{excluding} and \emph{including} a small number of non‑identifiable instances (annotated in the figure).}
    \label{fig:directed-compare-both}
\end{figure}

\paragraph{Results}
Figure~\ref{fig:directed-compare-both} shows that, as $T\to\infty$, CovMatch achieves near‑machine‑precision error across all sizes once anomalous non‑identifiable cases are excluded.
When $T=1000$, the average NSE increases with dimension (from roughly $2\times10^{-2}$ at $N=20$ to around $10^{-1}$ at $N=100$), which is expected: for fixed $T$, the ratio $T/N$ shrinks (from $50$ to $10$), making the sample covariance noisier and the problem harder.

\paragraph{Non‑identifiable cases}
As with DAGs, we occasionally observe graphs for which the CovMatch objective at the estimate is \emph{lower} than at the ground truth, indicating that a sparser adjacency reproduces the same covariance; this reflects fundamental non‑identifiability rather than estimation failure.
Including such instances produces local spikes in the $T\to\infty$ curve, whereas the identifiable cases remain consistently close to zero error.

\subsection{Real data}

We evaluate the proposed estimator on the $N=11$ T–cell protein–expression data originally assembled by \cite{sachs2005causal}.  The collection contains $14$ data sets, each recorded under a distinct experimental condition, with sample sizes ranging from $707$ to $927$.  Departures from linear–Gaussianity are handled exactly as in \cite{amendola2020structure}: every variable is treated as a monotone transformation of an underlying Gaussian factor, and the sample covariance in the Gaussian log-likelihood is replaced by the bias-corrected Kendall correlation matrix whose entry $(i,j)$ equals $\sin\!\bigl(\tfrac{\pi}{2}\,\hat{\tau}_{ij}\bigr)$, where $\hat{\tau}_{ij}$ is Kendall’s rank correlation.

For each of the fourteen data sets, we run our estimator independently. Because the copula–transformed covariance does not strictly adhere to the (linear) SEM, we place greater emphasis on sparsity and set \(\alpha = 0.1\).  
We retain only the twenty edges with the largest absolute weights and discard the rest. An edge is then included in the consensus graph if it survives this pruning in at least five of the fourteen runs.
The consensus network of \cite{sachs2005causal} serves as the ground truth; Fig.~\ref{fig:tcell-compare} shows this ground truth (left) and our estimate (right).

\begin{figure}[htbp]
  \centering
  \subfloat[Consensus network from \cite{sachs2005causal} \label{fig:tcell-gt}]{
    \includegraphics[width=0.45\linewidth]{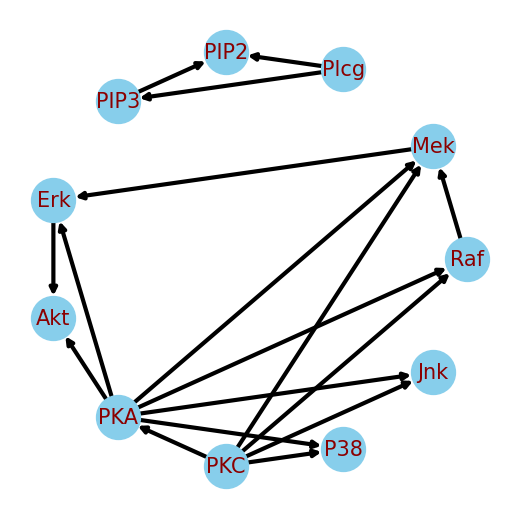}
  }\hfill
  \subfloat[Graph estimated by our method (edge frequency $\ge 6$)\label{fig:tcell-est}]{
    \includegraphics[width=0.45\linewidth]{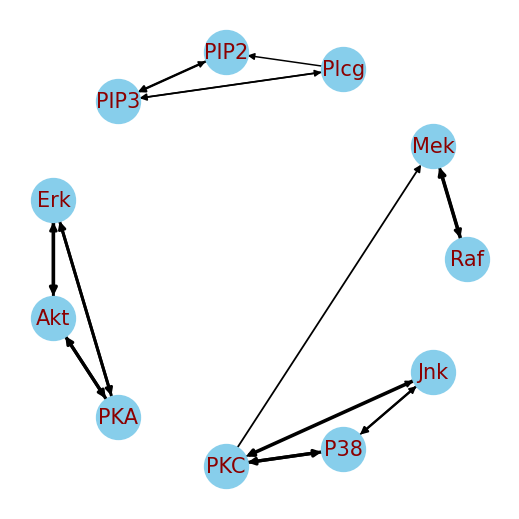}
  }
  \caption{Left: reference consensus network. Right: estimate obtained with the proposed approach.}
  \label{fig:tcell-compare}
\end{figure}
\begin{figure}[htbp]
  \centering
  \includegraphics[width=0.58\linewidth]{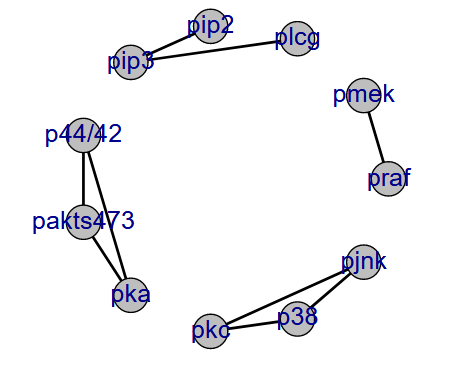}
  \caption{Graph reconstructed by the method of \cite{amendola2020structure}.}
  \label{fig:tcell-amendola}
\end{figure}

Although the protein–expression variables are clearly non-Gaussian, the copula transformation retains sufficient second-order structure for reliable recovery.  The graph produced by our method (Fig.~\ref{fig:tcell-est}) is biologically more plausible than the network obtained with the procedure of \cite{amendola2020structure} shown in Fig.~\ref{fig:tcell-amendola}, since our estimate contains more correctly recovered adjacencies.

\section{Conclusion}

In this study, we presented {CovMatch}, a unifying framework that reconstructs network structure by re-expressing the sample covariance under analytically tractable constraints and encoding prior graph knowledge directly in the optimization objective.  Using the linear structural equation model as a running example, we derived the optimization problem for both undirected and directed graphs.  In the undirected setting, the problem can be solved with off‑the‑shelf solvers, whereas in the directed setting, we propose an algorithm based on Riemannian gradient descent.

Comprehensive experiments on synthetic data reveal that {CovMatch} attains near-zero NSE when the underlying adjacency is sparse.  Remarkably, in the directed case, our method achieves lower asymptotic error than widely used DAG-specific algorithms such as {NOTEARS} and {DAGMA}, despite not imposing an acyclicity prior.  Real-world validation on the T–cell protein‐expression benchmark further confirms the practical utility of the approach.

\bibliographystyle{IEEEbib}
\bibliography{refs}

\appendix

\subsection{Signal Matching Limitation}
\label{appendix:naive_proof}

Consider the signal matching approach
\begin{equation}
\label{eq:naive_appendix}
\min_{\hat{\mathbf{S}}}
\frac{1}{T}\|\mathbf{X} - \hat{\mathbf{S}}\,\mathbf{X}\|_F^{2}
\quad\text{subject to}\quad
\hat{\mathbf{S}}=\hat{\mathbf{S}}^\top,\,
\operatorname{diag}(\hat{\mathbf{S}})=\mathbf{0}.
\end{equation}
Assume the data follow the SEM, i.e., for each column $\mathbf x_t$ of $\mathbf X=[\mathbf x_1,\ldots,\mathbf x_T]$,
$\mathbf x_t=\mathbf S\,\mathbf x_t+\mathbf e_t$ with $\mathbf S$ hollow and symmetric, and $\mathbf e_t\sim\mathcal N(\mathbf 0,\mathbf I)$ i.i.d.\ across $t$.
One might hope that a large sample size would make \(\hat{\mathbf{S}}=\mathbf{S}\) the optimal solution. However, we show below that the true \(\mathbf{S}\) cannot minimize \eqref{eq:naive_appendix} even $T$ goes to infinity.

\begin{proof}

To justify our claim, we first examine the objective function as $T\to\infty$.
Defining the sample covariance as $\mathbf C_{\mathbf x}:=\tfrac1T\,\mathbf X\mathbf X^\top$, we obtain the following trace form for the objective:
\begin{equation}
    \frac{1}{T}\bigl\|\mathbf X-\hat{\mathbf S}\mathbf X\bigr\|_F^2
= \operatorname{tr}(\mathbf C_{\mathbf x})
-2\,\operatorname{tr}(\hat{\mathbf S}\mathbf C_{\mathbf x})
+\operatorname{tr}(\hat{\mathbf S}\mathbf C_{\mathbf x}\hat{\mathbf S}^\top).
\end{equation}

When $T$ goes to infinity, by the SEM relationship, we have  $\mathbf C_{\mathbf x}\to(\mathbf I-\mathbf S)^{-2}$. We therefore define
\begin{equation}
\label{eq:def_f_signal_match}
\begin{aligned}
f(\hat{\mathbf S})
&:= \lim_{T\to\infty}\frac{1}{T}\bigl\|\mathbf X-\hat{\mathbf S}\mathbf X\bigr\|_F^2 - \operatorname{tr}(\mathbf C_{\mathbf x}) \\
& =-2\,\operatorname{tr}(\hat{\mathbf S}(\mathbf I-\mathbf S)^{-2})
+\operatorname{tr}(\hat{\mathbf S}(\mathbf I-\mathbf S)^{-2}\hat{\mathbf S}^\top)
\end{aligned}
\end{equation}
Suppose, for the sake of argument, that \(\hat{\mathbf{S}}=\mathbf{S}\) is indeed optimal for $f(\hat{\mathbf S})$. We then aim to derive a contradiction by examining the derivative structure.

From standard matrix calculus (see \cite{matrix_cookbook}), the derivative under a symmetry constraint becomes
\begin{equation}
\label{eq:df_structured}
\frac{d f}{d \hat{\mathbf{S}}}
\;=\;
\frac{\partial f}{\partial \hat{\mathbf{S}}}
\;+\;
\frac{\partial f}{\partial \hat{\mathbf{S}}^\top}
\;-\;
\operatorname{Diag}\!\bigl(
\tfrac{\partial f}{\partial \hat{\mathbf{S}}}
\bigr).
\end{equation}
Clearly, if any off-diagonal entry of $\frac{\partial f}{\partial \hat{\mathbf{S}}}
+
\frac{\partial f}{\partial \hat{\mathbf{S}}^\top}$ is nonzero, we could move a small step along the corresponding symmetric zero-diagonal direction to further decrease $f$.
Hence, $\frac{\partial f}{\partial \hat{\mathbf{S}}}
+
\frac{\partial f}{\partial \hat{\mathbf{S}}^\top}$ must be a diagonal matrix.

From $f(\hat{\mathbf S})
= -2\,\operatorname{tr}\!\bigl(\hat{\mathbf S}(\mathbf I-\mathbf S)^{-2}\bigr)
+ \operatorname{tr}\!\bigl(\hat{\mathbf S}(\mathbf I-\mathbf S)^{-2}\hat{\mathbf S}^\top\bigr)$
and the symmetry of $(\mathbf I-\mathbf S)^{-2}$, the partial derivative is
\begin{equation}
\label{eq:df_partial_general}
\frac{\partial f}{\partial \hat{\mathbf S}}
= -2(\mathbf I-\mathbf S)^{-2}
+ 2\,\hat{\mathbf S}(\mathbf I-\mathbf S)^{-2}
= 2(\hat{\mathbf S}-\mathbf I)\,(\mathbf I-\mathbf S)^{-2}.
\end{equation}
Evaluating at $\hat{\mathbf S}=\mathbf S$ gives
\begin{equation}
\label{eq:df_partial_atS}
\left.\frac{\partial f}{\partial \hat{\mathbf{S}}}
+
\frac{\partial f}{\partial \hat{\mathbf{S}}^\top}\right|_{\hat{\mathbf S}=\mathbf S}
= -4(\mathbf I-\mathbf S)(\mathbf I-\mathbf S)^{-2}
= -4(\mathbf I-\mathbf S)^{-1}.
\end{equation}

For this to be a diagonal matrix, \((\mathbf{I}-\mathbf{S})^{-1}\) must be a diagonal matrix, which means the true \(\mathbf{S}\) should also be diagonal. However, since we assume no self loops, this means only the trivial (zero) graph can be a solution.

This contradicts the goal of recovering a meaningful adjacency matrix, proving that naive node-domain fitting fails to identify \(\mathbf{S}\) even with infinite samples.
\end{proof}

\subsection{Maximum Likelihood Estimation for Covariance}
\label{appendix:MLE}

We can interpret the constraint in the CovMatch problem~\eqref{eq:role_reversal} as the solution to the unconstrained classical least squares problem on the left-hand side of~\eqref{eq:role_reversal}. We show now that this constraint can also be viewed as the solution to the unconstrained maximum likelihood problem, where we minimize 
\begin{equation}
\label{eq:f_Sigma_hat}
f(\hat{\boldsymbol{\Sigma}}_{\mathbf{x}})
=
\log \det \bigl(\hat{\boldsymbol{\Sigma}}_{\mathbf{x}}\bigr)
\;+\;
\mathrm{tr} \bigl(
\hat{\boldsymbol{\Sigma}}_{\mathbf{x}}^{-1}
\,\mathbf{C}_{\mathbf{x}}
\bigr).
\end{equation}

Using standard matrix calculus~\cite{matrix_cookbook} and the symmetry of \(\hat{\boldsymbol{\Sigma}}_{\mathbf{x}}\), we obtain

\begin{equation}
\label{eq:derivatives_logdet_tr}
\begin{aligned}
& \frac{\partial}{\partial \hat{\boldsymbol{\Sigma}}_{\mathbf{x}}}
\Bigl(\log \det \bigl( \hat{\boldsymbol{\Sigma}}_{\mathbf{x}}\bigr) \Bigr)
\;=\;
\hat{\boldsymbol{\Sigma}}_{\mathbf{x}}^{-1},
\\
& \frac{\partial}{\partial \hat{\boldsymbol{\Sigma}}_{\mathbf{x}}}
\Bigl(\mathrm{tr}\bigl(\hat{\boldsymbol{\Sigma}}_{\mathbf{x}}^{-1}
\,\mathbf{C}_{\mathbf{x}}\bigr)\Bigr)
\;=\;
-\,\hat{\boldsymbol{\Sigma}}_{\mathbf{x}}^{-1}
\,\mathbf{C}_{\mathbf{x}}
\,\hat{\boldsymbol{\Sigma}}_{\mathbf{x}}^{-1}.
\end{aligned}
\end{equation}

Thus,
\begin{equation}
\label{eq:df_Sigma_hat}
\frac{\partial f}{\partial \hat{\boldsymbol{\Sigma}}_{\mathbf{x}}}
\;=\;
\hat{\boldsymbol{\Sigma}}_{\mathbf{x}}^{-1}
\;-\;
\hat{\boldsymbol{\Sigma}}_{\mathbf{x}}^{-1}
\,\mathbf{C}_{\mathbf{x}}
\,\hat{\boldsymbol{\Sigma}}_{\mathbf{x}}^{-1}.
\end{equation}
Setting \(\partial f / \partial \hat{\boldsymbol{\Sigma}}_{\mathbf{x}} = \mathbf{0}\)  implies
\begin{equation}
\label{eq:swap_minimizer}
\hat{\boldsymbol{\Sigma}}_{\mathbf{x}}
=
\mathbf{C}_{\mathbf{x}}
\end{equation}
So both the least squares cost and the maximum likelihood cost on the left-hand side of~\eqref{eq:role_reversal} lead to the same CovMatch problem on the right-hand side of~\eqref{eq:role_reversal}.

\subsection{Covariance Matching for Colored Noise}
\label{appendix:covmatch_general}
We now move to the general scenario in which \(\boldsymbol{\Sigma}_{\mathbf{e}}\) does not need to be the identity matrix. 

For an undirected graph, although one could directly match \(\hat{\mathbf{H}}\,\boldsymbol{\Sigma}_{\mathbf{e}}\,\hat{\mathbf{H}}^{\top}\) to \(\mathbf{C}_{\mathbf{x}}\), solving that problem can be cumbersome. Thus, rather than matching $\hat{\mathbf{H}}\,\boldsymbol{\Sigma}_{\mathbf{e}}\,\hat{\mathbf{H}}^{\top}$ to $\mathbf{C}_{\mathbf{x}}$ directly, we can enforce
\begin{equation}
\label{eq:Hhat_match}
    (\hat{\mathbf H}\,\boldsymbol\Sigma_{\mathbf e})^2 = \mathbf C_{\mathbf x}\,\boldsymbol\Sigma_{\mathbf e},
\end{equation}
which mirrors the reasoning used in the simpler identity-noise case.

For the left-hand side of \eqref{eq:Hhat_match}, we introduce \(\hat{\mathbf{U}}\) and \(\hat{\boldsymbol{\lambda}}\) via the eigenvalue decomposition (EVD):
\begin{equation}
\label{eq:Hhat_Sigmae}
\hat{\mathbf{H}}\,\boldsymbol{\Sigma}_{\mathbf{e}}
=
\hat{\mathbf{U}}\,
\mathrm{diag}(\hat{\boldsymbol{\lambda}})
\,
\hat{\mathbf{U}}^{-1}.
\end{equation}
We then obtain
\begin{equation}
(\hat{\mathbf{H}}\,\boldsymbol{\Sigma}_{\mathbf{e}})^2
=
\hat{\mathbf{U}}\,
\mathrm{diag}(\hat{\boldsymbol{\lambda}}^2)
\,
\hat{\mathbf{U}}^{-1}.
\end{equation}
For the right-hand side of \eqref{eq:Hhat_match}, the EVD leads to
\begin{equation}
\mathbf{C}_{\mathbf{x}}\,\boldsymbol{\Sigma}_{\mathbf{e}}
=
\mathbf{U}_{\mathbf{xe}}\,
\mathrm{diag}(\boldsymbol{\lambda}_{\mathbf{xe}})
\,
\mathbf{U}_{\mathbf{xe}}^{-1}.
\end{equation}
Note that $\mathbf{U}_{\mathbf{xe}}$ is not necessarily orthogonal because $\mathbf{C}_{\mathbf{x}}\boldsymbol\Sigma_{\mathbf e}$ is generally non-symmetric. 
By setting \(\hat{\mathbf{U}}=\mathbf{U}_{\mathbf{xe}}\) and enforcing \(\hat{\boldsymbol{\lambda}}^2=\boldsymbol{\lambda}_{\mathbf{xe}}\), we once again face a sign ambiguity. Introducing \(\hat{\mathbf{q}}\in\{-1,1\}^N\) and writing
\begin{equation}
  \hat{\boldsymbol{\lambda}}
  = \operatorname{diag}(\hat{\mathbf{q}})\,
    \boldsymbol{\lambda}_{\mathbf{xe}}^{1/2}
\end{equation}
resolves this ambiguity. Substituting $ \hat{\boldsymbol{\lambda}}$ and $\hat{\mathbf{U}}$ back into \eqref{eq:Hhat_Sigmae} gives
\begin{equation}
\label{eq:HC}
  \hat{\mathbf{H}}
  = \mathbf{U}_{\!\mathbf{xe}}\,
    \operatorname{diag}(\hat{\mathbf{q}})\,
    \operatorname{diag}\bigl(\boldsymbol{\lambda}_{\mathbf{xe}}^{1/2}\bigr)\,
    \mathbf{U}_{\!\mathbf{xe}}^{-1}\,
    \boldsymbol{\Sigma}_{\mathbf{e}}^{-1},
\end{equation}
resulting in
\begin{equation}
  \hat{\mathbf{S}}
  = \mathbf{I}
    - \boldsymbol{\Sigma}_{\mathbf{e}}\,
      \mathbf{U}_{\!\mathbf{xe}}\,
      \operatorname{diag}\bigl(\boldsymbol{\lambda}_{\mathbf{xe}}^{-1/2}\bigr)\,
      \operatorname{diag}(\hat{\mathbf{q}})\,
      \mathbf{U}_{\!\mathbf{xe}}^{-1}.
      \label{eq:S_hat_colored_undirected}
\end{equation}

Accordingly, the expression for \(\hat{\mathbf{S}}\) in problem~\eqref{eq:undirected_structural_opt_2} is replaced by the one derived above.

Similarly, for a directed graph, we want to enforce  
\begin{equation}
\label{eq:dir_match}
\hat{\mathbf{H}}\,\boldsymbol{\Sigma}_{\mathbf{e}}\,\hat{\mathbf{H}}^{\top}
= \mathbf{C}_{\mathbf x}.
\end{equation}
To do this, we write the Cholesky decomposition  \(\mathbf{C}_{\mathbf x}=\mathbf{L}_{\mathbf x}\mathbf{L}_{\mathbf x}^{\top}\) and let 
\(\boldsymbol{\Sigma}_{\mathbf e}^{1/2}\) be the principal square root of the noise covariance matrix.  
Then \eqref{eq:dir_match} is actually equal to 
\begin{equation}
  \hat{\mathbf{H}}\;\boldsymbol{\Sigma}_{\mathbf e}^{1/2}
  \bigl(\hat{\mathbf{H}}\;\boldsymbol{\Sigma}_{\mathbf e}^{1/2}\bigr)^{\!\top}
  =\mathbf{L}_{\mathbf x}\mathbf{L}_{\mathbf x}^{\top}.
\end{equation}
Hence, there exists an orthogonal matrix \(\hat{\mathbf V}\in\mathcal O(N)\) such that
\begin{equation}
  \hat{\mathbf{H}}\;\boldsymbol{\Sigma}_{\mathbf e}^{1/2}\,\hat{\mathbf V}
  =\mathbf{L}_{\mathbf x}.
\end{equation}
Substituting \(\hat{\mathbf{H}}=(\mathbf{I}-\hat{\mathbf S})^{-1}\) and rearranging gives
\begin{equation}
  \hat{\mathbf S}
  =\mathbf I-
    \boldsymbol{\Sigma}_{\mathbf e}^{1/2}\,
    \hat{\mathbf V}\,
    \mathbf{L}_{\mathbf x}^{-1}.
\end{equation}
To incorporate this result into the optimization framework, simply replace the expression for \(\hat{\mathbf S}\) in problem~\eqref{eq:directed_structural_opt_2} with the formula derived above.

Both substitutions preserve the optimization problem structure; all algorithms remain applicable.
Hence, our method naturally extends to the colored-noise setting.

\subsection{Proof of Identifiability for Undirected Graph}
\label{appendix:proof_of_identifiability}

\begin{theorem}
Let the true GSO be $\mathbf S$ and  $\mathbf H = (\mathbf I-\mathbf S)^{-1}$. 
Let the EVD of $\mathbf{H}$ be $\mathbf H = \mathbf U \operatorname{diag}(\boldsymbol\lambda)\mathbf U^\top$. Further, consider the following two conditions:
\begin{enumerate}[label=(\roman*)]
    \item The matrix $\mathbf H$ has no two eigenvalues that are negatives of each other, i.e., $\nexists\, i\neq j:\ \lambda_i = -\lambda_j ;$
    \item The binary linear system
    \begin{equation}
    \label{eq:theorem_condition}
        (\mathbf U\circ\mathbf U)\,\operatorname{diag}(|\boldsymbol\lambda|^{-1})\,\hat{\mathbf q}
        = \mathbf 1,
        \qquad \hat{\mathbf q}\in\{\pm1\}^N,
    \end{equation}
    has a unique solution (since $\operatorname{sign}(\boldsymbol\lambda)$ is always a solution, the unique solution is equal to $\operatorname{sign}(\boldsymbol\lambda)$).
\end{enumerate}
Now, let us formulate the optimization problem
\begin{equation}
\label{eq:P1}
\begin{split}
&\argmin_{\hat{\mathbf q}\in\{\pm1\}^N}\|\mathbf W\hat{\mathbf q}-\mathbf 1\|_2^2,\\
&\text{where}\quad
\mathbf W   = (\mathbf U_{\mathbf x}\circ\mathbf U_{\mathbf x})\,
\operatorname{diag}(\boldsymbol\lambda_{\mathbf x}^{-1/2}) .
\end{split}
\end{equation}
By solving the optimization problem~\eqref{eq:P1}, we obtain a sign vector $\mathbf q^*$.
Then the resulting estimator $\mathbf S^*=\mathbf I-\mathbf U_{\mathbf x}\operatorname{diag}(\mathbf q^*)\,\operatorname{diag}(\boldsymbol\lambda_{\mathbf x}^{-1/2})\,\mathbf U_{\mathbf x}^\top$
satisfies $\mathbf S^* \to \mathbf S$ as $T\to\infty$.
\end{theorem}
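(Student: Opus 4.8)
The plan is to pass to the $T\to\infty$ limit, identify the limiting instance of problem~\eqref{eq:P1} with the binary system of condition (ii), and then invoke uniqueness to pin down $\mathbf q^\star=\operatorname{sign}(\boldsymbol\lambda)$, which reconstructs $\mathbf S$ exactly. First I would establish the limiting behavior of the statistics. Since the $\mathbf x_t$ are i.i.d.\ with covariance $\boldsymbol\Sigma_{\mathbf x}=\mathbf H^2$ (undirected case, $\boldsymbol\Sigma_{\mathbf e}=\mathbf I$), the law of large numbers gives $\mathbf C_{\mathbf x}\to\mathbf H^2$, so the eigenvalues satisfy $\boldsymbol\lambda_{\mathbf x}\to\boldsymbol\lambda^2$ and hence $\boldsymbol\lambda_{\mathbf x}^{-1/2}\to|\boldsymbol\lambda|^{-1}$ entrywise. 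The delicate part is the eigenvectors: $\mathbf H^2$ and $\mathbf H$ share eigenspaces precisely when every coincidence $\lambda_i^2=\lambda_j^2$ forces $\lambda_i=\lambda_j$, i.e.\ when no coincidence is manufactured by a sign flip $\lambda_i=-\lambda_j$. This is exactly condition (i). Under (i), each eigenspace of $\mathbf H^2$ coincides with an eigenspace of $\mathbf H$, so a consistent eigenvector estimate $\mathbf U_{\mathbf x}$ converges to a valid eigenbasis $\mathbf U$ of $\mathbf H$. I would emphasize that the reconstruction map $\mathbf U_{\mathbf x}\operatorname{diag}(\hat{\mathbf q})\operatorname{diag}(\boldsymbol\lambda_{\mathbf x}^{-1/2})\mathbf U_{\mathbf x}^\top$ is invariant both to column sign flips of $\mathbf U_{\mathbf x}$ and to the orthogonal choice of basis inside any repeated eigenspace (each eigenvector enters quadratically), so these unavoidable ambiguities are harmless.

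Next I would identify the limiting objective. As $T\to\infty$, the coefficient matrix becomes $\mathbf W=(\mathbf U_{\mathbf x}\circ\mathbf U_{\mathbf x})\operatorname{diag}(\boldsymbol\lambda_{\mathbf x}^{-1/2})\to\mathbf W_\infty:=(\mathbf U\circ\mathbf U)\operatorname{diag}(|\boldsymbol\lambda|^{-1})$, which is precisely the matrix in~\eqref{eq:theorem_condition}. Using the identity $\operatorname{diag}(\mathbf A\operatorname{diag}(\mathbf d)\mathbf A^\top)=(\mathbf A\circ\mathbf A)\mathbf d$ with $\mathbf A=\mathbf U$ and $\mathbf d=\hat{\mathbf q}\circ|\boldsymbol\lambda|^{-1}$, the residual is $\mathbf W_\infty\hat{\mathbf q}-\mathbf 1=-\operatorname{diag}(\hat{\mathbf S})$ for the reconstructed $\hat{\mathbf S}$. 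Thus driving the objective to zero is exactly equivalent to enforcing hollowness, i.e.\ to solving the binary system of condition (ii). Since the true $\mathbf S$ is hollow, $\operatorname{sign}(\boldsymbol\lambda)$ attains objective value zero, and by condition (ii) it is the unique binary solution, so every other $\hat{\mathbf q}$ yields a strictly positive residual.

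Finally I would transfer this to finite large $T$ by a stability-of-argmin argument. Because the feasible set $\{\pm1\}^N$ is finite and $\|\mathbf W\hat{\mathbf q}-\mathbf 1\|_2^2$ is continuous in $\mathbf W$, and because the limiting problem is strictly separated (value $0$ at $\operatorname{sign}(\boldsymbol\lambda)$ versus a uniform positive gap at every other point of a finite set), for all sufficiently large $T$ the minimizer $\mathbf q^\star$ of~\eqref{eq:P1} must equal $\operatorname{sign}(\boldsymbol\lambda)$. Substituting back and taking the limit gives $\mathbf S^\star\to\mathbf I-\mathbf U\operatorname{diag}(\operatorname{sign}(\boldsymbol\lambda))\operatorname{diag}(|\boldsymbol\lambda|^{-1})\mathbf U^\top=\mathbf I-\mathbf U\operatorname{diag}(\boldsymbol\lambda^{-1})\mathbf U^\top=\mathbf I-\mathbf H^{-1}=\mathbf S$, as claimed.

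The main obstacle is the eigenvector-convergence step. One must argue carefully that condition (i) is exactly what prevents the eigenvectors of $\mathbf C_{\mathbf x}$ from converging to a rotated basis that mixes the $+\lambda$ and $-\lambda$ subspaces of $\mathbf H$, and simultaneously that the residual eigenbasis freedom (sign flips and within-eigenspace rotations under genuine multiplicity) leaves the reconstructed $\mathbf S^\star$ unchanged; these two facts together are what make the limiting map well defined and equal to $\mathbf S$.
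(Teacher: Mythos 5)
Your proposal is correct and follows essentially the same route as the paper's proof: under condition (i) the eigenspaces of $\mathbf C_{\mathbf x}=\mathbf H^2$ are matched with those of $\mathbf H$ (so that $\mathbf U_{\mathbf x}=\mathbf U$ and $\boldsymbol\lambda_{\mathbf x}^{1/2}=|\boldsymbol\lambda|$), the identity $\operatorname{diag}\bigl(\mathbf A\operatorname{diag}(\mathbf d)\mathbf A^\top\bigr)=(\mathbf A\circ\mathbf A)\mathbf d$ turns the objective of~\eqref{eq:P1} into the binary system of condition (ii), uniqueness forces $\mathbf q^*=\operatorname{sign}(\boldsymbol\lambda)$, and back-substitution gives $\mathbf S^*=\mathbf I-\mathbf U\operatorname{diag}(\boldsymbol\lambda^{-1})\mathbf U^\top=\mathbf S$. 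The only difference is that the paper argues directly at $T=\infty$, whereas you additionally supply the law-of-large-numbers, eigenvector-convergence, and stability-of-argmin steps justifying the passage from finite $T$ to the limit — added rigor on the same underlying argument rather than a different route.
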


\begin{proof}

At $T=\infty$, we have
\begin{equation}
\label{eq:Cx_undirected_decomp}
\mathbf C_{\mathbf x}
= \mathbf H^2
= \mathbf U \operatorname{diag}(\boldsymbol\lambda^2)\mathbf U^\top.
\end{equation}
On the other hand, we can write the EVD of $\mathbf C_{\mathbf x}$ as
\begin{equation}
\label{eq:Cx_EVD_undirected}
  \mathbf C_{\mathbf x}
  = \mathbf U_{\mathbf x}\operatorname{diag}(\boldsymbol\lambda_{\mathbf x})
    \mathbf U_{\mathbf x}^\top .
\end{equation}
Observing that both \eqref{eq:Cx_undirected_decomp} and
\eqref{eq:Cx_EVD_undirected} are EVD forms of the same covariance
matrix $\mathbf C_{\mathbf x}$, and using Assumption~(i), if
$\lambda_i^2=\lambda_j^2$ then necessarily $\lambda_i=\lambda_j$.
Hence every eigenspace of $\mathbf C_{\mathbf x}$ coincides with an
eigenspace of $\mathbf H$, and we can choose the EVD such that
\begin{equation}
\mathbf U_{\mathbf x}=\mathbf U, 
\qquad 
\boldsymbol\lambda_{\mathbf x}^{1/2} = |\boldsymbol\lambda|.
\end{equation}

Using $\mathrm{diag}\big(\mathbf A\,\mathrm{diag}(\mathbf v)\,\mathbf A^\top\big)
=(\mathbf A\circ\mathbf A)\mathbf v$, for any $\hat{\mathbf q}\in\{\pm1\}^N$ we have
\begin{equation}
\begin{aligned}
    \mathbf W\hat{\mathbf q}
&= (\mathbf U_{\mathbf x}\circ\mathbf U_{\mathbf x})\,\boldsymbol\lambda_{\mathbf x}^{-1/2}\hat{\mathbf q} \\
&= \mathrm{diag}\!\Big(
    \mathbf U_{\mathbf x}\,\mathrm{diag}\big(\boldsymbol\lambda_{\mathbf x}^{-1/2}\hat{\mathbf q}\big)\,
    \mathbf U_{\mathbf x}^\top
\Big) \\
&= \mathrm{diag}\!\Big(
    \mathbf U\,\mathrm{diag}\big(|\boldsymbol\lambda|^{-1}\hat{\mathbf q}\big)\,
    \mathbf U^\top
\Big) \\
&= (\mathbf U\circ\mathbf U)\,\operatorname{diag}(|\boldsymbol\lambda|^{-1})\,\hat{\mathbf q}.
\end{aligned}
\end{equation}
By Assumption (ii), the equation
$(\mathbf U\circ\mathbf U)\,\operatorname{diag}(|\boldsymbol\lambda|^{-1})\,\hat{\mathbf q}
=\mathbf 1$
has the unique binary solution $\operatorname{sign}(\boldsymbol\lambda)$.
Hence the minimizer of \eqref{eq:P1} satisfies
$\mathbf q^*=\operatorname{sign}(\boldsymbol\lambda)$ for $T=\infty$.

Finally, the estimate $\mathbf S^*$ then converges to the ground truth $\mathbf S$ as
\begin{equation}
\begin{aligned}
\mathbf S^*
&= \mathbf I - \mathbf U_{\mathbf x}\operatorname{diag}(\mathbf q^*)\,
   \operatorname{diag}(\boldsymbol\lambda_{\mathbf x}^{-1/2})\,\mathbf U_{\mathbf x}^\top \\
&= \mathbf I - \mathbf U\,\operatorname{diag}(\operatorname{sign}(\boldsymbol\lambda))\,
   \operatorname{diag}(|\boldsymbol\lambda|^{-1})\,\mathbf U^\top \\
& = \mathbf I - \mathbf U\,\operatorname{diag}(\boldsymbol\lambda^{-1})\,\mathbf U^\top = \mathbf I - \mathbf H^{-1} = \mathbf S .
\end{aligned}
\end{equation}

\end{proof}

\subsection{Proof of Identifiability for Directed Graph}
\label{appendix:proof_of_identifiability_2}

Since $\mathbf H$ is not necessarily symmetric in the directed case, we work
with its SVD, i.e., $\mathbf H = \mathbf U \operatorname{diag}(\boldsymbol\lambda)\mathbf V^\top$.
At $T=\infty$, the covariance matrix then satisfies
\begin{equation}
\label{eq:Cx_directed_decomp}
\mathbf C_{\mathbf x}
= \mathbf H\mathbf H^\top
= \mathbf U \operatorname{diag}(\boldsymbol\lambda^2)\mathbf U^\top,
\end{equation}
which is similar to the expression for the undirected case in~\eqref{eq:Cx_undirected_decomp}.

On the other hand, we can write the EVD of $\mathbf C_{\mathbf x}$ as
\begin{equation}
\label{eq:Cx_EVD_directed}
  \mathbf C_{\mathbf x}
  = \mathbf U_{\mathbf x}\operatorname{diag}(\boldsymbol\lambda_{\mathbf x})
    \mathbf U_{\mathbf x}^\top .
\end{equation}

Observing that both \eqref{eq:Cx_directed_decomp} and
\eqref{eq:Cx_EVD_directed} are EVD forms of the same covariance
matrix $\mathbf C_{\mathbf x}$, their eigenvalues and eigenvectors must match. Furthermore,  since the singular
values in $\boldsymbol\lambda$ are nonnegative, we have 
\begin{equation}
 \mathbf U =  \mathbf U_{\mathbf x},
  \qquad
  \boldsymbol\lambda
  = \boldsymbol\lambda_{\mathbf x}^{1/2}.
\end{equation}

\subsection{Derivative in Euclidean Space}
\label{appendix:derivative}

In the directed case, problem~\eqref{eq:directed_structural_opt_2} reduces to an orthogonality-constrained minimization over $\hat{\mathbf V}\in\mathcal{O}(N)$ with the objective function $\mathcal{J}(\hat{\mathbf V})$ defined in~\eqref{eq:obj_J}
\begin{equation}
    \begin{aligned}
\label{eq:directed_objective}
\mathcal{J}(\hat{\mathbf V})
&= \Big\|\mathrm{diag}\!\Bigl(
\hat{\mathbf V}\,\mathrm{diag}(\boldsymbol{\lambda}_{\mathbf x}^{-1/2})
\,\mathbf U_{\mathbf x}^{\top}-\mathbf I\Bigr)\Big\|_F^2
\\
&\quad
+ \alpha\Big\|\hat{\mathbf V}\,\mathrm{diag}(\boldsymbol{\lambda}_{\mathbf x}^{-1/2})
\,\mathbf U_{\mathbf x}^{\top}-\mathbf I\Big\|_1 .
\end{aligned}
\end{equation}

To apply Riemannian methods, we need to compute the Euclidean gradient of $\mathcal{J}$ in $\mathbb{R}^{N\times N}$.  
 For notational convenience, let
$
\hat{\mathbf S}= \mathbf I-\hat{\mathbf V}\,\mathrm{diag}(\boldsymbol\lambda_{\mathbf x}^{-1/2})\,\mathbf U_{\mathbf x}^{\top},
$
then
$\mathrm d\hat{\mathbf S}=-\,\mathrm d\hat{\mathbf V}\,
(\mathrm{diag}(\boldsymbol\lambda_{\mathbf x}^{-1/2})\,\mathbf U_{\mathbf x}^{\top})$.

Using the Frobenius product identity $\langle A,BC\rangle=\langle AC^{\top},B\rangle$, for any scalar function $f$, we obtain
\[
\begin{aligned}
\mathrm df
&= \big\langle \nabla_{\hat{\mathbf S}} f(\hat{\mathbf S}),\,\mathrm d\hat{\mathbf S}\big\rangle\\
&= \Big\langle \nabla_{\hat{\mathbf S}} f(\hat{\mathbf S}),\, -\,\mathrm d\hat{\mathbf V}\,
\big(\mathrm{diag}(\boldsymbol\lambda_{\mathbf x}^{-1/2})\,\mathbf U_{\mathbf x}^{\top}\big)\Big\rangle
\\
&= -\,\Big\langle \nabla_{\hat{\mathbf S}} f(\hat{\mathbf S})\,
\big(\mathrm{diag}(\boldsymbol\lambda_{\mathbf x}^{-1/2})\,\mathbf U_{\mathbf x}^{\top}\big)^{\!\top},\,
\mathrm d\hat{\mathbf V}\Big\rangle,
\end{aligned}
\]
hence $\nabla_{\hat{\mathbf V}}\,f(\hat{\mathbf S})
= -\,\big(\nabla_{\hat{\mathbf S}} f(\hat{\mathbf S})\big)\,
\mathbf U_{\mathbf x}\,\mathrm{diag}(\boldsymbol\lambda_{\mathbf x}^{-1/2}).$

Therefore, for the diagonal penalty, using $\nabla_{\mathbf X}\|\mathrm{diag}(\mathbf X)\|_F^2=2\,\mathrm{Diag}(\mathbf X)$, we get
\begin{equation}
\begin{aligned}
& \quad\frac{\partial}{\partial \hat{\mathbf V}}
\Big\|\mathrm{diag}\!\big(
\hat{\mathbf V}\,\mathrm{diag}(\boldsymbol{\lambda}_{\mathbf x}^{-1/2})
\,\mathbf U_{\mathbf x}^{\top}-\mathbf I\big)\Big\|_F^2
\\&= \frac{\partial}{\partial \hat{\mathbf V}}
\Big\|\mathrm{diag}\!\big(-\hat{\mathbf S}\big)\Big\|_F^2
\\
&= -\,2\,\mathrm{Diag}(\hat{\mathbf S})\,
\mathbf U_{\mathbf x}\,
\mathrm{diag}(\boldsymbol{\lambda}_{\mathbf x}^{-1/2})
\\
&= -\,2\,\mathrm{Diag}\!\Big(
\mathbf I-
\hat{\mathbf V}\,\mathrm{diag}(\boldsymbol{\lambda}_{\mathbf x}^{-1/2})
\,\mathbf U_{\mathbf x}^{\top}\Big)\,
\mathbf U_{\mathbf x}\,
\mathrm{diag}(\boldsymbol{\lambda}_{\mathbf x}^{-1/2}) .
\end{aligned}
\end{equation}

For the sparsity penalty, using the elementwise subgradient $\partial\|\mathbf X\|_1=\mathrm{sign}(\mathbf X)$ (with $\mathrm{sign}(0)\in[-1,1]$), we obtain
\begin{equation}
\begin{aligned}
&\quad \frac{\partial}{\partial \hat{\mathbf V}}
\Big\|\hat{\mathbf V}\,
\mathrm{diag}(\boldsymbol{\lambda}_{\mathbf x}^{-1/2})
\,\mathbf U_{\mathbf x}^{\top}-\mathbf I\Big\|_1
\\&= \frac{\partial}{\partial \hat{\mathbf V}}\big\|-\hat{\mathbf S}\big\|_1
\\
&\in \ \mathrm{sign}\!\big(\hat{\mathbf V}\,
\mathrm{diag}(\boldsymbol{\lambda}_{\mathbf x}^{-1/2})
\,\mathbf U_{\mathbf x}^{\top}-\mathbf I\big)\,
\mathbf U_{\mathbf x}\,\mathrm{diag}(\boldsymbol{\lambda}_{\mathbf x}^{-1/2})
\\
&= -\,\mathrm{sign}\!\Big(
\mathbf I-\hat{\mathbf V}\,
\mathrm{diag}(\boldsymbol{\lambda}_{\mathbf x}^{-1/2})
\,\mathbf U_{\mathbf x}^{\top}\Big)\,
\mathbf U_{\mathbf x}\,
\mathrm{diag}(\boldsymbol{\lambda}_{\mathbf x}^{-1/2}),
\end{aligned}
\end{equation}

Since $\operatorname{sign}(\cdot)$ is discontinuous and often induces pronounced oscillations in the objective, in practice we optimize with a Huber-type smooth surrogate for $\|\cdot\|_1$ ~\cite{huber1992robust} and revert to $\operatorname{sign}(\cdot)$ only once the loss has stabilized.
Concretely, if the objective shows no improvement for 30 consecutive inner iterations $l$ in Algorithm~\ref{alg:basinhop_riemann_multi_candidates}, we consider it stabilized and replace the Huber-like gradient by the discrete 
$\operatorname{sign}(\cdot)$ function.

\subsection{Candidate Set Update}
\label{appendix:candidate_update}

To maintain diversity, candidates are admitted in ascending order of 
objective value, but only if they are sufficiently different from those 
already selected. Specifically, let $\delta>0$ be a distance threshold. 
We traverse the refined set $\mathcal{R}$ in order of increasing 
$\mathcal{J}(\hat{\mathbf V})$ and add a matrix $\hat{\mathbf V}$ to the 
candidate set $\mathcal{C}$ only if
\begin{equation}
    \min_{\hat{\mathbf W}\in\mathcal{C}}
    \|\hat{\mathbf V}-\hat{\mathbf W}\|_F^2 > \delta .
\end{equation}
This process continues until either $L'$ candidates are collected or 
$\mathcal{R}$ is exhausted. The resulting $\mathcal{C}$ therefore contains 
the $L'$ best solutions while avoiding near-duplicate matrices.

In practice, we adapt $\delta$ depending on the optimization dynamics 
discussed in Appendix~\ref{appendix:derivative}: when the loss has not yet 
stabilized we set $\delta=0.5$, whereas after stabilization we use 
$\delta=0.01$.

\end{document}